\documentclass{article}
\usepackage[english]{babel}

\usepackage{amssymb, amsmath, wasysym, mathrsfs}
\usepackage{graphicx, wrapfig}
\usepackage{tabularx}
\usepackage{microtype}
\usepackage{enumitem}
\usepackage{latexsym,xcolor,amsthm}
\usepackage[colorlinks]{hyperref}



\title{Critical Placements of a Square or Circle amidst Trajectories for Junction Detection}

\author{Ingo van Duijn\thanks{MADALGO, Aarhus University,
  \texttt{ivd@cs.au.dk}}
  \and
Irina Kostitsyna\thanks{Universit\'{e} libre de Bruxelles,
  \texttt{irina.kostitsyna@ulb.ac.be}}
  \and
Marc van Kreveld\thanks{Utrecht University
    \texttt{m.j.vankreveld@uu.nl}}
  \and
Maarten L\"offler\thanks{Utrecht University
    \texttt{m.loffler@uu.nl}}
}

\graphicspath{{figures/}}

\newcommand{\Oh}{\mathcal{O}}
\newcommand{\Li}{\mathcal{L}}
\newcommand{\arr}{{\cal A}}
\newcommand{\eps}{\varepsilon}
\newcommand{\T}{{\cal T}}
\newcommand{\Sq}{\Box}
\newcommand{\Ci}{\bigcirc}
\newcommand{\tr}{{\tau}}
\newcommand{\rlem}[1]{Lemma~\ref{lem:#1}}
\newcommand{\rfig}[1]{Figure~\ref{fig:#1}}
\newcommand{\rsec}[1]{Section~\ref{sec:#1}}
\newcommand{\Se}[2]{s(#2)}
\newcommand{\Seg}[2]{S(#1, #2)}
\newcommand{\Segs}[1]{S(#1)}

\newtheorem{theorem}{Theorem}

\newtheorem{lemma}[theorem]{Lemma}
\newtheorem{cor}[theorem]{Corollary}

\newtheorem{defi}{Definition}

\date{}
\pagestyle{plain}
\begin{document}

\maketitle

\begin{abstract}
Motivated by automated junction recognition in tracking data, we study a problem of placing a
square or disc of fixed size in an arrangement of lines or line segments in the plane.
We let distances among the intersection points of the lines and line segments with the square
or circle define a clustering, and study the complexity of \emph{critical} placements for this clustering.
Here critical means that arbitrarily small movements of the placement change the clustering.

A parameter $\eps$ defines the granularity of the clustering. Without any assumptions on $\eps$,
the critical placements have a trivial $\Oh(n^4)$ upper bound. When the square or circle has
unit size and $0 < \eps < 1$ is given, we show a refined $\Oh(n^2/\eps^2)$ bound,
which is tight in the worst case.

We use our combinatorial bounds to design efficient algorithms to compute junctions.
As a proof of concept for our algorithms we have a prototype implementation that showcases
their application in a basic visualization of a set of $n$ trajectories and their $k$ most important junctions.
\end{abstract}

\section{Introduction}

Many analysis problems in geography have an inherent scale component:
the ``granularity'' or ``coarseness'' at which the data is studied. 
The most direct way to model spatial scale in geographic problems is by using a
fixed-size neighborhood of locations, such as a fixed-size square or circle.
For example, \emph{population density} can be studied at the scale of a city or
at the scale of a country, where one may consider the population in units with
an area of $10^4\; m^2$ or $25\; km^2$, respectively.
There are many other cases where local geographic phenomena are studied at
different spatial scales.
The Geographical Analysis Machine is an example of a system that supports
such analyses on point data sets~\cite{openshaw}.

In computational geometry, the problem of computing the placement of a square or circle
to optimize some measure has received considerable attention.
For a set of $n$ points in the plane, one can compute the (fixed-size, fixed-orientation)
square that maximizes the number of points inside in $\Oh(n \log n)$ time
(expand every point to a square, and the problem becomes finding
a point in the maximum number of squares which is solved by a sweep).
Mount et al.~\cite{mount} study the overlap function of two simple polygons under translation and
show, among other things, that one can compute the placement of a square that maximizes the area of
overlap with a simple polygon with $n$ vertices in $\Oh(n^2)$ time.
For a weighted subdivision, one can compute a placement that maximizes the weighted area inside
in $\Oh(n^2)$ time as well~\cite{Buchin}; this problem is motivated by clustering in aggregated data.
In the context of diagram placement on maps, various other measures to minimize or maximize when
placing squares were considered, like the total length of border overlap~\cite{ksw}.

Our interest lies in a problem concerning trajectory data. A trajectory is represented
by a sequence of points with associated time stamps, and models the movement of an entity through
space; we will assume in this paper that the movement space is two-dimensional.
The identification of ``interesting regions'' in the plane defined by a collection of trajectories
has  been studied in several papers recently.
These regions can be characterized as meeting places~\cite{gk}, popular or interesting
places~\cite{BenkertDGW10,gks,PalmaBKA08,TiwariK13}, and
stop regions~\cite{MorenoTRB10}.
In several cases, interesting regions are also defined as squares of fixed size, placed suitably.
The more algorithmic papers show such regions of interest can typically be computed in $\Oh(n^2)$ or $\Oh(n^3)$ time;
what is possible of course depends on the precise definition of the problem.
There are many other types of problems that can be formulated with trajectory data.
For overviews, see~\cite{glw-mpstd-08,jyj-tpm-11}.

Besides exact algorithms for optimal square or circle placement, approximation algorithms have been
developed for several problems on trajectory and other data, see e.g.~\cite{gk,hm,hk}.
\medskip


\noindent
{\bf Motivation and problem description.}
We consider a problem on trajectories related to common movements and
changes of movement directions at certain places. Imagine a large open space like a
town square, a large entrance hall, or a grass field. People tend to traverse such areas in ways
that are not random, and the places where a decision is made and possibly a change of direction
is initiated may be specific. Also for data like ant tracks, the identification of places where
tracks go different ways is of interest.
Without going into details, these observations motivate us to study placements of a square or
circle of fixed size such that bundles of incoming and outgoing entities arise.

\begin{figure}[tb]
\begin{center}
\includegraphics[width=.9\textwidth]{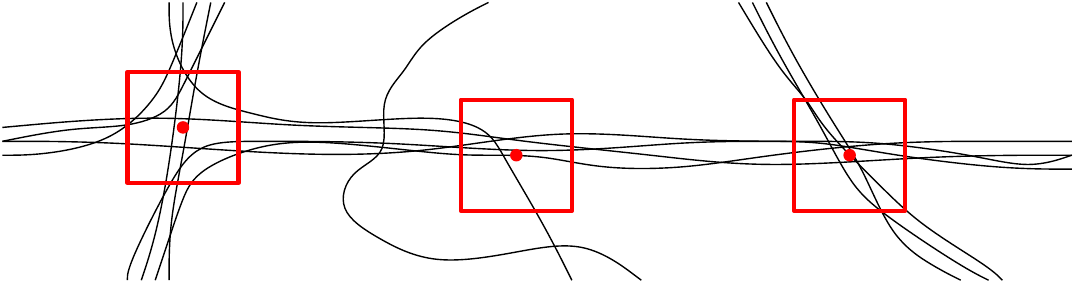}
\end{center}
\caption{A set of trajectories and square placements at junctions of varying significance.}
\label{fig:example}
\end{figure}

Consider the tracks in Figure~\ref{fig:example}. We observe that to define places where the tracks of entities cross and where decisions are made, we can use the placement of a square of a certain size
and where the tracks enter and leave the square. We are interested in placements that give rise to bundles: large subsets of tracks that enter and leave the square at a similar place on its boundary, and with a gap to the next location along the boundary where this happens.
The left square in the figure has five bundles where one bundle consists of only one trajectory, and the middle and right square have only two bundles (albeit with different ``topology'').
We see that the left and right squares indicate regions that should be considered more significant than the middle one for being a junction.
The main difference between the left and right junctions is that on the left, decisions are made to go straight or change direction, whereas on the right, all moving entities went straight and no different decisions were taken.
\smallskip

We study an abstracted version of this placement problem and ignore many of the
practical issues that our simplified definition of a junction would have.
Later in the paper we briefly address such issues by giving a slightly more
involved definition of a junction.
The majority of the paper concentrates on an abstract
combinatorial and computational problem that
lies at the core of junction detection.

Let $\T$ be a collection of trajectories, let $\Sq_p$ be the boundary of a unit square $S$
centered at $p$ placed axis-aligned amidst the trajectories, and let $\eps$ be a positive real constant.
Let $Q=\T\cap\Sq_p$ be the set of all intersections of trajectories with the
boundary of the square (here we can ignore the time component of trajectories; they are considered polygonal lines).
Two points in $Q$ are \emph{$\eps$-close} if their distance along $\Sq_p$ is at most $\eps$.
$\T$ and $\Sq_p$ give rise to a \emph{clustering} of $Q$ on $\Sq_p$ by the transitive closure
relation of $\eps$-closeness.
Different clusters are separated by a distance larger than $\eps$ along $\Sq_p$.
A single cluster of $Q$ corresponds to parts of trajectories that enter or leave $S$ in
each other's proximity (in Figure~\ref{fig:example} from left to right the squares define four, three, and four clusters respectively).

Now consider the two-dimensional space of all placements of a square of fixed size by choosing
its center as its reference point.
We say that a placement $q$ is \emph{critical}, if any arbitrarily small neighborhood of $q$
contains points inducing different clusters.
This can be due to a change in size of a cluster on $\Sq_q$ or to a change in the clustering.
The latter corresponds to placements where the distance between two points of
$\T\cap \Sq_p$ on $\Sq_p$ is exactly $\eps$, with no other points of $\T\cap \Sq_p$
in between. Since noncritical points are part of a region that defines the same clustering,
one can think of the set of critical points to be the boundary between these regions.
\medskip

\noindent
{\bf Results and organization.}
In Section~\ref{sec:complex} we analyze the complexity of the space of critical placements
for fixed-size squares in an arrangement of $n$ lines.
We show that the placement space has $\Oh(n^2/\eps^2)$ total complexity in the worst case,
and can be constructed in $\Oh(n^2\log n + k)$ time where $k$ is the true complexity (the latter appearing in the appendix due to space constraints).
In Section~\ref {sec:lower}, we show that these results are tight by presenting an explicit construction that exhibits the worst-case behavior.
In Section~\ref{sec:junction} we discuss our application to junction detection further and show
output from a prototype implementation.
We conclude in Section~\ref{sec:conclusions}.
In the appendix we further show how to extend our approach to more realistic settings,
such as placing a square on an arrangement of line segments or placing a circle rather than a square.

\section{Square on lines}
\label{sec:complex}

We begin by studying the simplest version of the problem: placing a unit square over an arrangement of lines.
The lines ``cut'' the boundary of the square into several pieces.
We are interested in all placements of the square such that one of these pieces has length exactly $\eps$,
in which case we call the piece an \emph {$\eps$-segment} and the placement \emph {critical}.
When a piece contains one of the corners of the square, its length is the sum of its two incident line segments.
Note that this definition of a critical placement is a simplification of the one defined earlier in terms of clusters;
it only considers merging and splitting clusters.
In the definition from the introduction, a placement is also critical if a corner of the square coincides with a line.
However, these placements are simply four translates of the input lines, so the rest of this section focuses on the harder critical placements as defined here.

\subsection{Placement space}


Let $\Li$ be a set of lines and denote by $\arr$ the arrangement of $\Li$.
For a placement of the square on $\arr$, consider all cells that contain part of its boundary.
To characterize all critical placements, we look at how the square can be ``moved around'' such that a given cell of $\arr$ contains an $\eps$-segment throughout the motion.
For instance, in \rfig{definitions} on the left, we can move the square slightly left and right such that there is always an $\eps$-segment in the same cell.
We use the intuition of moving the square to argue that the critical placements can be characterized as a set of line segments, and
then prove an upper bound on how many times these line segments can intersect.

\begin{defi}
\label{def:epsplace}
Let $\Li$ be a set of lines and $\Sq_p$ be the boundary of an axis-aligned unit square whose center is denoted by $p$, and let $\eps>0$ be a constant. A placement of $\Sq_p$ (or $p$) is an \emph{$\eps$-placement}
if at least one connected component in $\Sq_p \setminus \Li$ has length exactly $\eps$.
An \emph{$\eps$-segment} is a connected component of $\Sq_p \setminus \Li$ with
length exactly $\eps$.
\end{defi}

Consider the example from the figure again.
When moving $p$ to the right we maintain the indicated $\eps$-segment; this reduces $p$'s movement to one degree of freedom.
It can happen that when moving $p$,
another $\eps$-segment is created in another cell.
This means that this $\eps$-placement gives rise to two $\eps$-segments.
We assume $\Li$ to be in general position such that no two $\eps$-segments give $p$ the same allowed movement (a condition that is met after perturbing the input).
Therefore no more than two $\eps$-segments will ever occur simultaneously.

Since $\eps$-segments are part of $\Sq_p$, it is convenient to fix a point $p'$ on $\Sq_p$ and look at the movement of $p'$ instead of $p$.
Thus, by moving the square such that an $\eps$-segment is maintained inside a cell $c$ in $\arr$, the point $p'$ traces a curve.
If we consider only those parts of this curve corresponding to placements where $p'$ lies on the $\eps$-segment,
we observe that this subcurve is contained in $c$.
For instance, in \rfig{definitions} on the bottom right, the fixed point $p'$ can be moved vertically ($p$ and the square move accordingly) exactly between the intersection points with $\Li$.
To facilitate our analysis, we will choose a set of fixed points on $\Sq_p$ such that any $\eps$-segment will contain exactly one of these points.
For ease of presentation we assume that $\frac 1 \eps$ is integer, so that we can place exactly $\frac{1}{\eps}$ fixed points with distance $\eps$ apart along $\Sq_p$.
If it is not integer, we can pick fixed points such that an $\eps$-segment contains one or two such points, in which case our analysis overcounts by a constant factor.

Since these points are fixed with respect to $p$, we define them in terms of \emph{translation vectors}.
We write $\tr(X)$ for the translation by $\tr$ of any object $X$. The inverse of $\tr$ is denoted $\tr^{-1}$.
Thus, in \rfig{definitions}, $p' = \tr(p)$.

\begin{defi}
Let $\frac{1}{\eps}$ be integer. $T_\eps$ is the set of $\frac{4}{\eps}$ vectors such that $\Sq_p \setminus \{\tr(p) \mid \tr \in T_\eps\}$
is a set of open line segments each of length $\eps$, and $T_\eps$ includes all vectors $\sigma$ such that $\sigma(p)$ is a corner of $\Sq_p$.
\end{defi}

\begin{figure}[tb]
\begin{center}
\includegraphics[width=.95\textwidth]{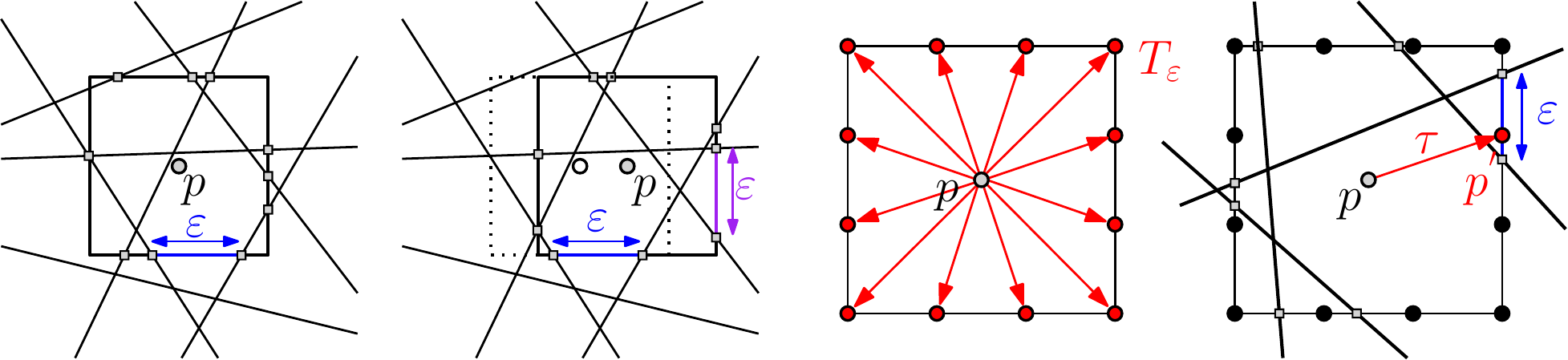}
\end{center}
\caption{Left two: placement of square with one $\eps$-segment and two $\eps$-segments. Right two: the set of vectors $T_\eps$ and an $\eps$-placement $p$ with the vector in $T_p$ indicated.}
\label{fig:definitions}
\end{figure}

Let $c$ be a cell of $\arr$ and $\tau \in T_\eps$ a vector.
We denote by $\Seg{c}{\tr}$ the set of all critical placements $p$ such that $\tau(p)$ lies on an $\eps$-segment in $c$.
That is, $\tau(\Seg{c}{\tau})$ is the set of curves traced out by $\tau(p)$ under our previous interpretation of moving $p'$.
We note that some cells are ``too small'' to contain an $\eps$-segment, in which case $\Seg{c}{\tr}$ is empty; other cells may contain up to two disconnected curves (see Figure~\ref{fig:sctau}).
As shorthand notation, let $\Segs{\tr}$ denote the union
of all $\Seg{c}{\tr}$ over all cells in $\arr$.

If $\tau(p)$ is not a corner of $\Sq_p$, then $\tau(\Seg{c}{\tr})$ coincides with (one\footnote{Degenerate case where the width or height of $c$ is exactly $\eps$.} or) two parallel axis-aligned $\eps$-segments contained in $c$.
Therefore, the number of such segments is bounded by twice the number of cells in the arrangement of lines.
If $\tau(p)$ \emph {is} a corner of $\Sq_p$, the shape of $\Seg{c}{\tr}$ is more complex. This means that a similar bound on $\Segs{\tr}$ as before is not as easy to achieve.
\rfig{sctau} shows the construction of $\Seg{c}{\tr}$ when $\tr(p)$ is the upper right corner of $\Sq_p$.
The curve inside $c$ shows exactly how $\tr(p)$ can be moved such that $c$ contains an $\eps$-segment containing $\tr(p)$.
Translated by the inverse of $\tr$, we get the actual $\eps$-placements of $p$ corresponding to the particular cell $c$ and vector~$\tr$.
Note that the figure also shows the only case when $\Seg{c}{\tr}$ can be disconnected, i.e. when $c$ contains an acute angle in the same ``direction'' as $\tau$.
We will show that for the four corner vectors $\tr$, the curves in $\Segs{\tr}$ have properties that
allow us to bound the complexity of the space of all $\eps$-placements.

\begin{figure}[tb]
\centering
    \includegraphics[scale=.8]{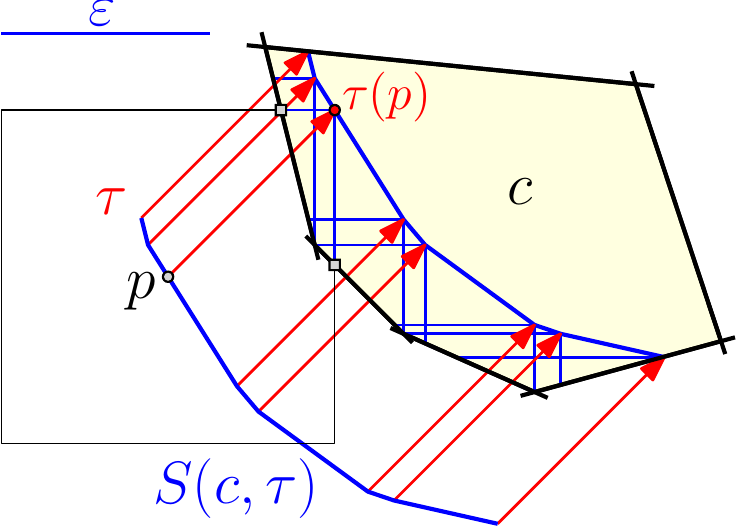}
    \hfil
    \includegraphics[scale=.8]{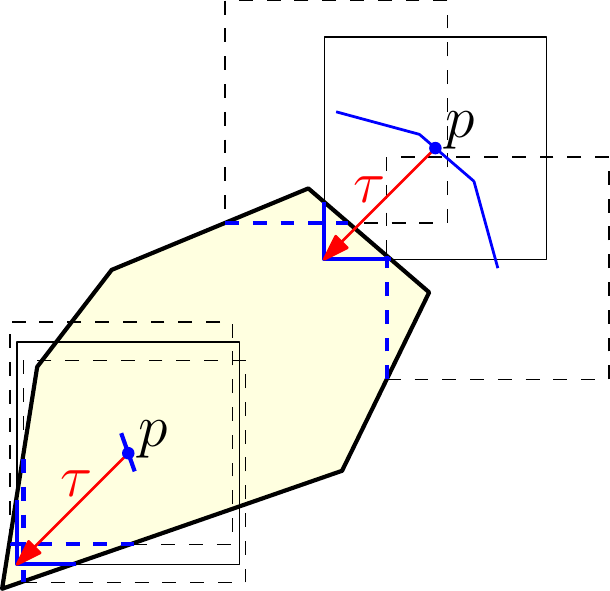}
    \caption{Left: construction of the curve $\Seg{c}{\tr}$. Right: $\Seg{c}{\tr}$ can consist of multiple pieces.}
    \label{fig:sctau}
\end{figure}


\begin{lemma}\label{lem:convex}
For $\tr \in T_\eps$ and a cell $c$ in $\arr$, $\Seg{c}{\tr}$ consists of at most two connected subsets of a piecewise linear and convex curve.
\end{lemma}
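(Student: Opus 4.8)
The plan is to analyze, instead of $\Seg{c}{\tr}$ itself, the set $\tr(\Seg{c}{\tr})$ of positions the fixed point $\tr(p)$ attains over all placements for which $\tr(p)$ lies on an $\eps$-segment contained in $c$. Since $\tr^{-1}$ is a translation it preserves connectedness, piecewise linearity and convexity, so it is enough to prove the statement for $\tr(\Seg{c}{\tr})$. Throughout I use that cells of $\arr$ are convex and that $\Li$ is in general position. Consider first the case that $\tr(p)$ is not a corner of $\Sq_p$, say it lies in the interior of the top edge. Every $\eps$-segment contains exactly one point of $\{\sigma(p)\mid\sigma\in T_\eps\}$, and the four square corners are among these points, so the $\eps$-segment through $\tr(p)$ contains no corner and is a horizontal sub-segment of the top edge. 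For a candidate position $q=(x,y)$, which lies in the open convex cell $c$, moving left and right from $q$ along the line $Y=y$ first meets $\Li$ on $\partial c$; hence the $\eps$-segment is exactly the horizontal chord of $c$ through $q$, and the condition is that this chord has length $\eps$. By convexity of $c$ the chord length is a concave function of the height $y$, so the heights with chord length $\ge\eps$ form an interval, and the condition ``$=\eps$'' selects (generically) its at most two endpoints; at each such height the admissible $q$ form a horizontal segment. Thus $\tr(\Seg{c}{\tr})$ is at most two parallel axis-aligned segments, and these are two connected pieces of the boundary of their bounding rectangle, a piecewise linear convex curve.

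Now let $\tr(p)$ be a corner, say the upper-right one, so $\tr$ points up and to the right. The $\eps$-segment through such a corner is an ``L'' that runs left along the top edge a length $s$ to the first line of $\Li$ and down along the right edge a length $t$ to the first line, with $s+t=\eps<1$; since $\eps<1$ these parts stay within the two edges, and since $q=(x,y)$ lies in the open cell $c$ the two first lines lie on $\partial c$. Writing $L(y)=\inf\{X\mid(X,y)\in c\}$ and $B(x)=\inf\{Y\mid(x,Y)\in c\}$ for the left and bottom boundary chains of $c$ (we may assume $c$ is bounded in the $-x$ and $-y$ directions, as otherwise $\Seg{c}{\tr}=\emptyset$), we get $s=x-L(y)$ and $t=y-B(x)$. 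Convexity of $c$ makes $L$ and $B$ convex and piecewise linear, so $f(q):=s+t=x+y-L(y)-B(x)$ is concave and piecewise linear on $c$, and a short check shows $\tr(\Seg{c}{\tr})=\{q\in c\mid f(q)=\eps\}$.

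Consequently $K:=\{q\in c\mid f(q)\ge\eps\}$ is a convex polygon inside $c$, whose boundary $\partial K$ is a piecewise linear convex curve; and by general position $f$ is nonconstant on every edge, so $\tr(\Seg{c}{\tr})$ is exactly the union of the maximal sub-arcs of $\partial K$ that lie in the open cell $c$. The number of such arcs equals the number of connected components of $c\setminus K=\{f<\eps\}$ (or is one, if that set is empty or connected), so the lemma reduces to the claim that $\{f<\eps\}$ has at most two components. This is the heart of the argument, and the intended route is: besides the single component containing the lower-left corner of $c$ (where $s$ and $t$ are both small), every component of $\{f<\eps\}$ that reaches $\partial c$ must ``hug'' a vertex of $c$ whose interior wedge is contained in the closed quadrant of directions pointing up and to the right — the quadrant determined by $\tr$ — because near such a vertex, and only there, can the leftward distance $s$ and the downward distance $t$ to $\partial c$ both be small (compare the right part of \rfig{sctau}). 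A convex polygon has at most one vertex of this kind: if $v$ and $v'$ were two, then $v'-v$ would lie in the wedge at $v$, forcing $v'$ to be up and to the right of $v$, while symmetrically $v$ would be up and to the right of $v'$ — a contradiction. Hence $\{f<\eps\}$ has at most two components, $\tr(\Seg{c}{\tr})$ is at most two arcs of the convex curve $\partial K$, and translating by $\tr^{-1}$ finishes the proof.

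The main obstacle is making precise the step asserting that every component of $\{f<\eps\}$ touching $\partial c$ is associated either with the lower-left corner of $c$ or with such an up-right-quadrant vertex: this requires walking $f|_{\partial c}$ along the bottom, right, top and left boundary chains of $c$ and, using that $f$ is assembled from the leftward distance $s$ and the downward distance $t$ together with the convexity of $c$, ruling out spurious local minima of $f$ below level $\eps$ elsewhere on $\partial c$. The remaining ingredients — the translation reduction, the concavity of the chord length and of $f$, and the identification of $\tr(\Seg{c}{\tr})$ with the free boundary of a convex polygon — are routine.
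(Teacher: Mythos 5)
Your setup (working with $\tr(\Seg{c}{\tr})$, the non-corner case via concavity of the chord length, and for the corner case the function $f(q)=x+y-L(y)-B(x)$, its concavity and piecewise linearity, and the identification of $\tr(\Seg{c}{\tr})$ with the part of the level set $\{f=\eps\}$ inside the open cell) matches the paper's argument, which proves concavity of the same function by a midpoint argument. The gap is in the step that actually bounds the number of connected components — which you yourself flag as ``the heart of the argument'' and leave unproven — and the intended route you sketch does not work as stated. First, the quadrant is reversed: a vertex of $c$ whose \emph{interior} wedge is contained in the closed up-right quadrant is a vertex from which the whole convex cell extends up and to the right, i.e.\ exactly the lower-left corner of $c$ that you have already charged the first component to; your uniqueness argument therefore proves uniqueness of that corner, not of the vertex responsible for the disconnection. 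The disconnection in \rfig{sctau} occurs at a sharp vertex \emph{pointing} up-right, whose interior wedge lies in the down-left quadrant. Second, even after swapping the quadrant, the claim ``near such a vertex, and only there, can $s$ and $t$ both be small'' is false: both distances can be small along a portion of the boundary between the rightmost and the topmost vertex without any single vertex having its wedge confined to a quadrant — e.g.\ a thin diagonal protrusion truncated by a short edge (two vertices at its tip, each of whose wedges contains an axis direction), or the rightmost vertex of a cell that is horizontally narrower than $\eps$ at that height, where $f<\eps$ just inside even though the wedge at that vertex contains the leftward direction. So the purely local, per-vertex characterization of where a second component can appear is not correct, and the at-most-two bound does not follow from the uniqueness of a quadrant-confined vertex.

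What is missing is precisely a global control of $f$ along the boundary of $c$, which is how the paper finishes: it argues that $f$ attains the value $\eps$ at most four times on $\partial c$ (on the chain facing down-right the boundary value of $f$ is the horizontal chord length, which is a concave — hence unimodal — function of the height and so crosses $\eps$ at most twice, and symmetrically for the chain facing up-left; on the chain between the leftmost and bottommost vertices $f$ tends to $0$), so the convex level curve is cut by $\partial c$ into at most two arcs. If you want to salvage your sublevel-set formulation, you must carry out exactly this walk of $f|_{\partial c}$ along the four chains determined by the leftmost, bottommost, rightmost and topmost vertices and bound the number of sign changes of $f-\eps$ by four; the vertex-counting shortcut cannot replace it.
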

\begin{proof}
If $\tr(p)$ is not a corner point of $\Sq_p$ then $\Seg{c}{\tr}$ is either empty or consists of two single line segments.

Without loss of generality let $\tr$ translate
$p$ to the upper right corner of $\Sq_p$. Because the result must hold for arbitrary $\eps$, we define the following function $f$ related
to the notion of an $\eps$-segment, but without being dependent on $\eps$.

For a point $a$, let $x_a$ be the distance to the closest line in $\Li$ left of $a$,
and let $y_a$ be the distance to the closest line below $a$; define $f(a) = x_a+y_a$.
Let $f_c$ be $f$ restricted to the points in a cell $c$ of $\arr$, and consider
two points $a$ and $b$ inside $c$. For a point $v$ halfway between $a$ and $b$,
we have $x_v \geq (x_a+x_b)/2$ and $y_v\geq (y_a+y_b)/2$ by the convexity of $c$.
It follows that $f_c(v)\geq (f_c(a)+f_c(b))/2$. 
Thus, $f_c$ is a concave function inside $c$, and
taking the level set of a concave function yields a convex curve (possibly extending outside $c$).
Observe that the set of all points $a \in c$ such that $f(a) = \eps$ is the same as $\Seg{c}{\tr}$
and is a level set of $f_c$, so it is convex.
Similarly, $f_c$ is piecewise linear and hence, so is $\Seg{c}{\tr}$.
Since there are at most $4$ points on the boundary of $c$ where $f_c$ has value $\eps$,
$\Seg{c}{\tr}$ consists of at most two connected components.
\end{proof}

\subsection {Complexity}

\begin{lemma}\label{lem:sstar}
For all $\tr \in T_\eps$, $\Segs{\tr}$ consists of $\Oh(n^2)$ line segments.
\end{lemma}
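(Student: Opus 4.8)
The plan is to split $\Segs{\tr}$ into the easy case and the hard case according to whether $\tr(p)$ is a corner of $\Sq_p$. If $\tr(p)$ is not a corner, then by \rlem{convex} (or rather the first sentence of its proof) each $\Seg{c}{\tr}$ is either empty or two axis-aligned segments, so $\Segs{\tr}$ has at most $2\cdot|\arr|=\Oh(n^2)$ segments since the arrangement of $n$ lines has $\Oh(n^2)$ cells. The remaining work is the four corner vectors $\tr$; fix one, say the one sending $p$ to the upper-right corner, and recall from \rlem{convex} that each $\Seg{c}{\tr}$ is (at most two pieces of) a piecewise-linear convex curve, namely a level set of the concave piecewise-linear function $f_c(a)=x_a+y_a$ on the cell $c$.

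First I would bound the number of linear pieces of $f$ globally rather than cell by cell. The function $x_a$ (distance to the nearest line of $\Li$ strictly to the left of $a$) is itself piecewise linear: its breakpoints lie on the vertical lines through arrangement vertices and on bisector-type loci where two lines become equidistant on the left. The key observation is that the partition of the plane into maximal regions on which $x_a$ is a single linear function is a refinement of $\arr$ by $\Oh(n)$ additional vertical lines and $\Oh(n^2)$ bisector segments, so it has $\Oh(n^2)$ faces in total; the same holds for $y_a$, and hence for $f=x_a+y_a$. Within one such face of this refined subdivision, the level set $f=\eps$ is a single line segment. Therefore $\Segs{\tr}$, being the union of these level-set pieces, consists of $\Oh(n^2)$ line segments. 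I would present this either via the refined-subdivision argument or, more cheaply, by the level-set/Euler-type counting already hinted at in the text: each $\eps$-segment of $\Seg{c}{\tr}$ has its endpoints on $\partial c$, there are $\Oh(1)$ per cell by \rlem{convex}, but one must then additionally bound the number of \emph{bends} within a cell — and that is exactly where the $\Oh(n)$ breakpoints of $f_c$ inside a single cell enter, giving $\Oh(n)$ segments per cell in the worst case but $\Oh(n^2)$ overall by a charging argument to arrangement vertices.

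The main obstacle is the bookkeeping in the corner case: a single cell $c$ can be large and its boundary curve $\Seg{c}{\tr}$ can have many bends, so the naive ``$\Oh(1)$ per cell'' bound fails and one needs a global accounting of the linear pieces of $f$. I expect the cleanest route is to make the refined subdivision explicit — overlay $\arr$ with the $\Oh(n)$ vertical lines through vertices of $\arr$ (breakpoints of $x_a$) and the $\Oh(n)$ horizontal lines through vertices of $\arr$ (breakpoints of $y_a$), observe $x_a$ and $y_a$ are linear on each resulting face, argue the overlay still has $\Oh(n^2)$ faces (overlay of three arrangements of $\Oh(n)$ lines each), and conclude that $f=\eps$ contributes one segment per face. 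Summing over the four corner vectors and adding the non-corner contribution gives the claimed $\Oh(n^2)$ bound on $\Segs{\tr}$.
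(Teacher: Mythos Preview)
Your non-corner case matches the paper. For the corner case, the paper's argument is precisely the simple local one you mention only in passing: in a cell $c$ with $k$ edges, every bend of $\Seg{c}{\tr}$ occurs where $\tr(p)$ is horizontally or vertically aligned with a vertex of $c$, so $\Seg{c}{\tr}$ has $\Oh(k)$ linear pieces; summing $k$ over all cells of $\arr$ gives $\Oh(n^2)$. That is the entire proof.

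Your preferred ``cleanest route'' via a global overlay has a counting error. The arrangement $\arr$ has $\Theta(n^2)$ vertices in general position, not $\Oh(n)$, so there are $\Theta(n^2)$ horizontal lines and $\Theta(n^2)$ vertical lines through arrangement vertices. Overlaying all of those with $\arr$ yields $\Theta(n^4)$ faces, not $\Oh(n^2)$, and the bound you want does not follow. (The ``bisector-type loci'' are also a red herring: inside a convex cell $c$ the leftward horizontal ray from $a$ hits a unique boundary edge of $c$, so $x_a$ is just the horizontal distance to that edge and its only breakpoints within $c$ are at the heights of the vertices of $c$ --- no equidistance loci enter.) The global picture can be salvaged, but only by adding, for each cell $c$, just the horizontal and vertical lines through the vertices of \emph{that} cell; this is exactly the $\Oh(k)$-per-cell charging you set aside as less clean, and it is what the paper does.
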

\begin{proof}
For any $\tau$ not corresponding to a corner, $S(\tau)$ contains at most two horizontal or two vertical
line segments for each cell of $\arr$. Next, assume that $\tau$ corresponds to a corner of $\Sq_p$, say,
the upper right corner.
Let $c$ be a cell with $k$ edges in $\arr$, and let $p$ be an $\eps$-placement such that $\tr(p) \in c$.
For a fixed $y$-coordinate of $p$, there are at most two $\eps$-placements with $\tr(p) \in c$ by convexity of the cell $c$.
Furthermore, for each vertex of $\Seg{c}{\tr}$, the point $\tr(p)$ aligns horizontally or vertically with
a vertex of $c$. It follows that the complexity of $\Seg{c}{\tr}$ is $\Oh(k)$.
%
Summing over the complexities of all cells
in the arrangement gives the bound.
\end{proof}

We are interested in the complexity of the placement space of $\Sq_p$, and it is therefore important to know how many times
two sets of curves $\Segs{\tr}$ and $\Segs{\sigma}$ intersect, for $\tr,\sigma\in T_\eps$.
An intersection of these two sets corresponds to a placement of $\Sq_p$
such that $\tr(p)$ and $\sigma(p)$ both lie on an $\eps$-segment of $\Sq_p$. The following lemma shows that
the number of intersections is bounded by the complexity of $\arr$.

\begin{lemma}\label{lem:nsquared}
For distinct $\tr, \sigma \in T_\eps$, the intersection of $\Segs{\tr}$ and $\Segs{\sigma}$ contains $\Oh(n^2)$ points.
\end{lemma}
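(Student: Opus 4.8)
The plan is to bound the number of intersections between $\Segs{\tr}$ and $\Segs{\sigma}$ by a direct, cell-by-cell counting argument, exploiting the structural results already established in Lemmas~\ref{lem:convex} and~\ref{lem:sstar}. An intersection point is a placement $p$ of the square for which \emph{both} $\tr(p)$ and $\sigma(p)$ lie on $\eps$-segments of $\Sq_p$; let $c$ be the cell of $\arr$ containing $\tr(p)$ and $c'$ the cell containing $\sigma(p)$. So the question becomes: for a fixed pair of cells $(c, c')$, how many placements $p$ have $\tr(p)$ on an $\eps$-segment in $c$ \emph{and} $\sigma(p)$ on an $\eps$-segment in $c'$? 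Equivalently, in the image space of $\tr$, this is the set $\tr(\Seg{c}{\tr}) \cap \tr(\Seg{c'}{\sigma})$, where $\tr(\Seg{c'}{\sigma})$ is the curve $\Seg{c'}{\sigma}$ translated by the fixed vector $\tr\sigma^{-1}$. By Lemma~\ref{lem:convex}, each of the two curves $\tr(\Seg{c}{\tr})$ and $\tr(\Seg{c'}{\sigma})$ is (at most two arcs of) a piecewise-linear \emph{convex} curve.

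First I would handle the easy cases: if either $\tr$ or $\sigma$ is a non-corner vector, the corresponding $S(\cdot)$ consists of $\Oh(n^2)$ axis-aligned segments (Lemma~\ref{lem:sstar}), and since the other set is a union of convex curves with total complexity $\Oh(n^2)$ (again Lemma~\ref{lem:sstar}), a standard argument bounds the number of crossings by $\Oh(n^2)$ — a horizontal line crosses a single convex curve $\Oh(1)$ times, and there are $\Oh(n^2)$ convex pieces. The interesting case is when both $\tr$ and $\sigma$ are corner vectors (necessarily of distinct corners, since $\tr \ne \sigma$). Here the crux is: \emph{two convex piecewise-linear curves intersect in at most two points if they are both graphs over a common axis direction, or more generally cross $\Oh(1)$ times locally}, but two arbitrary convex curves can cross in $\Omega(\text{complexity})$ points. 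The key additional structure I would extract is that $\tr(\Seg{c}{\tr})$ is the level set of the \emph{concave} function $f$ (sum of distances to the nearest line below and to the left) restricted to $c$, while $\tr(\Seg{c'}{\sigma})$ is the level set of a \emph{different} such function $g$ (sum of distances to nearest line in two possibly different directions, shifted by $\tr\sigma^{-1}$) restricted to $c'$. An intersection of the two level sets lies in $c \cap c'$, and at such a point both $f = \eps$ and $g = \eps$; I would argue that within the fixed cell pair, $f - g$ is piecewise linear with a controlled number of breakpoints — the breakpoints occur only where $\tr(p)$ aligns with an edge of $c$ or $\sigma(p)$ aligns with an edge of $c'$ — so the locus $\{f = g\}$ has $\Oh(k + k')$ complexity (with $k, k'$ the number of edges of $c, c'$), and on each linear piece of $\{f = g\}$ the common value is affine, hence equals $\eps$ either nowhere, at one point, or everywhere (the last being a non-generic coincidence ruled out by general position). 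Summing $\Oh(k + k')$ over all pairs $(c, c')$ of cells that actually overlap after the fixed translation $\tr\sigma^{-1}$, and noting that each cell overlaps only $\Oh(1)$ amortized others in a way that keeps the double sum at $\Oh(n^2)$ (since $\sum_c k_c = \Oh(n^2)$ and the translated arrangement overlays the original in a $\Oh(n^2)$-complexity overlay), yields the claimed $\Oh(n^2)$ bound.

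The main obstacle I anticipate is making the overlay-counting rigorous: a priori, the translated curve-set $\tr(\Segs{\sigma})$ could interact with many cells of $\arr$, and one must be careful that $\sum_{(c,c')} (k_c + k_{c'})$, restricted to pairs whose associated curves genuinely meet, is $\Oh(n^2)$ rather than $\Oh(n^3)$ or $\Oh(n^4)$. The clean way is to charge each intersection point to a vertex of the overlay of $\arr$ with the translate $\tr\sigma^{-1}(\arr)$ — an arrangement of $2n$ lines, hence $\Oh(n^2)$ vertices — together with the observation that between consecutive overlay vertices along a convex curve piece the function $f - \eps$ (or $f - g$) is affine and thus has at most one zero. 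I would therefore reframe the whole proof around the single arrangement of $2n$ lines $\Li \cup \tr\sigma^{-1}(\Li)$ and show each of its $\Oh(n^2)$ cells contributes $\Oh(1)$ intersection points, which is both cleaner and avoids the delicate double summation. The concavity/convexity from Lemma~\ref{lem:convex}, now applied to the refined cells, guarantees the $\Oh(1)$-per-cell bound.
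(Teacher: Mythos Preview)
Your overall strategy---pass to the overlay of $2n$ lines $\tau^{-1}(\Li)\cup\sigma^{-1}(\Li)$ (equivalently $\Li\cup\tau\sigma^{-1}(\Li)$) and bound the contribution of each of its $\Oh(n^2)$ cells---is exactly the structure the paper uses. The gap is in your last step: the claim that convexity from Lemma~\ref{lem:convex} alone yields $\Oh(1)$ intersections per overlay cell, or that ``between consecutive overlay vertices along a convex curve piece the function $f-g$ is affine.'' That is not true. The breakpoints of the piecewise-linear curves $\Seg{c_\tau}{\tau}$ and $\Seg{c_\sigma}{\sigma}$ occur where $\tau(p)$ (resp.\ $\sigma(p)$) aligns \emph{horizontally or vertically} with a vertex of $c_\tau$ (resp.\ $c_\sigma$); these alignment lines are \emph{not} lines of the overlay $\arr'$. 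Hence inside a single cell of $\arr'$ both curves can still consist of many straight pieces, and two convex polygonal arcs with $k$ and $h$ pieces can in general cross $\Omega(\min(k,h))$ times, so convexity by itself does not finish the argument.

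The paper closes this gap with a short case split that you are missing. Fix an intersection point $p$ in a cell $c$ of $\arr'$, and let $s_\tau(p)$ and $s_\sigma(p)$ be the individual straight pieces of $\Seg{c_\tau}{\tau}$ and $\Seg{c_\sigma}{\sigma}$ through $p$. \emph{Case~1:} some endpoint (breakpoint) of $s_\tau(p)$ or $s_\sigma(p)$ lies in $c$. Then that single straight segment meets the at most two convex arcs of the other curve in $\Oh(1)$ points; charge these to the breakpoint. Total breakpoints in $\Segs{\tau}$ and $\Segs{\sigma}$ are $\Oh(n^2)$ by Lemma~\ref{lem:sstar}. \emph{Case~2:} neither segment has a breakpoint in $c$, so both are full straight chords of $c$; $s_\tau(p)$ must exit $c$ through an edge $e$ coming from $\sigma^{-1}(\Li)$, and such an edge meets the convex curve $\Seg{c_\tau}{\tau}$ only $\Oh(1)$ times. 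Charging to the $\Oh(n^2)$ edges of $\arr'$ bounds these intersections as well. This two-target charging (breakpoints of $\Segs{\cdot}$, edges of $\arr'$) is the missing ingredient that replaces your unjustified $\Oh(1)$-per-cell claim; your earlier $\Oh(k+k')$-per-pair observation was in fact pointing in the right direction before you abandoned it.
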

\begin{figure*}[htb]
\begin{center}
\includegraphics[width=.99\textwidth]{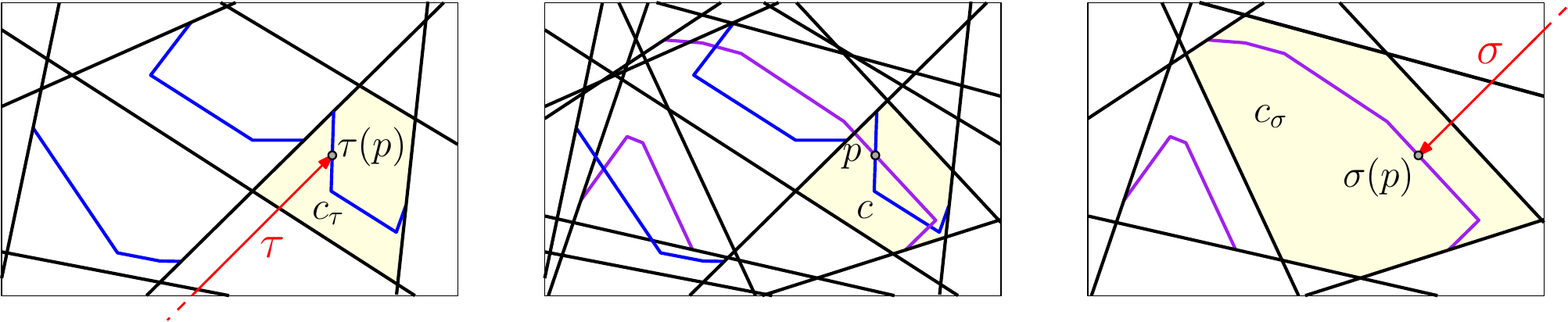}
\end{center}
\caption{Example arrangement with $\eps$-placements for two vectors $\tr$ and $\sigma$.
Left: $\Li$ and $\tr(\Segs{\tr})$. Right: $\Li$ and $\sigma(\Segs{\sigma})$. Center: Both
 combined under appropriate translations. }
\label{fig:refinedarr}
\end{figure*}
%
%
\begin{proof} Let $p$ be an $\eps$-placement such that $p \in \Segs{\tr} \cap \Segs{\sigma}$, for distinct $\tr, \sigma \in T_\eps$. Let
$\arr' = \arr \left( \tr^{-1}(\Li) \cup \sigma^{-1}(\Li) \right) $ be the arrangement of two copies of $\Li$ translated
by the inverses of $\tau$ and $\sigma$. Let $c$ denote the cell in $\arr'$ that contains $p$, and
let $c_\tr$ and $c_\sigma$ denote the cells in $\arr$ that respectively
contain $\tr(p)$ and $\sigma(p)$  (see \rfig{refinedarr}).
Observe that $c = \tr^{-1}(c_\tr) \cap \sigma^{-1}(c_\sigma)$.

Let $\Se{p}{\tr}$ denote the line segment in $\Seg{c_\tr}{\tr}$ on which $p$ lies.
Distinguish the following two cases for $p$: either at least one end point of $\Se{p}{\tr}$ or $\Se{p}{\sigma}$ lies in $c$, or none. In the first case,
without loss of generality assume that one end point of $\Se{p}{\tr}$ lies in c.
By \rlem{convex}, $\Seg{c_\sigma}{\sigma}$ consists of at most two convex curves, hence $\Se{p}{\tr}$ intersects $\Seg{c_\sigma}{\sigma}$ at most four times.
 Because $\Seg{c_\sigma}{\sigma}$ is the only part of $\Segs{\sigma}$
intersecting $c$, it also holds that the part of $\Se{p}{\tr}$ inside $c$ intersects $\Segs{\sigma}$ only four times.
Thus, all intersections of this kind are bounded by the number of vertices in $\Segs{\tau}$, which by \rlem{sstar} is $\Oh(n^2)$

In the other case, no end point of $\Se{p}{\tr}$ or $\Se{p}{\sigma}$ lies in $c$.
Consider an edge $e$ of $c$ in $\arr'$ that intersects $\Se{p}{\tr}$.
Since $e$ intersects $\Se{p}{\tr}$ it must be an edge of $c_\sigma$.
Therefore, it intersects $\Seg{c_\tr}{\tr}$ and thus $\Segs{\tr}$ at most twice.

Therefore, the number of intersections of this kind is bounded by the number of edges in $\arr'$, which is $\Oh(n^2)$.
%
\end{proof}
Since there are $\Oh(\frac 1 {\eps^2})$ pairs of vectors in $T_\eps$, we immediately obtain:
\begin{theorem}
Given a set of $n$ lines $\Li$, the arrangement of $\eps$-placements of a unit square
has complexity $\Oh(\frac{n^2}{\eps^2})$.
\end{theorem}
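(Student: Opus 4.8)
The plan is to realize the arrangement of $\eps$-placements as the overlay of the $\frac{4}{\eps}$ curve families $\Segs{\tr}$, one for each vector $\tr \in T_\eps$, and then bound the complexity of this overlay by the standard accounting for an arrangement of piecewise-linear curves: (number of edges/bends) $+$ (number of pairwise intersections).

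First I would argue that the set of all $\eps$-placements is exactly $\bigcup_{\tr \in T_\eps} \Segs{\tr}$. By the defining property of $T_\eps$, every $\eps$-segment of $\Sq_p$ contains exactly one fixed point $\tr(p)$ with $\tr \in T_\eps$ (including the degenerate cases at corners, which is why the corner vectors are put into $T_\eps$). Hence a placement $p$ is an $\eps$-placement if and only if $p \in \Seg{c}{\tr}$ for the vector $\tr$ and the cell $c$ of $\arr$ with $\tr(p) \in c$; conversely every point of every $\Seg{c}{\tr}$ is an $\eps$-placement by Definition~\ref{def:epsplace}. So the arrangement of $\eps$-placements is precisely the planar arrangement induced by these $\frac{4}{\eps}$ families of curves.

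Next I would bound the two contributions separately. For the within-one-family part: for fixed $\tr$, the pieces $\Seg{c}{\tr}$ lie, after translating by $\tr$, inside pairwise disjoint cells $c$ of $\arr$, so translating back by $\tr^{-1}$ keeps them pairwise disjoint; thus the only vertices $\Segs{\tr}$ contributes on its own are the bend points of its constituent convex polygonal curves (\rlem{convex}), and by \rlem{sstar} there are $\Oh(n^2)$ of them. Summing over the $\frac{4}{\eps}$ vectors gives $\Oh(n^2/\eps)$. For the between-two-families part: by \rlem{nsquared}, each of the $\binom{4/\eps}{2} = \Oh(1/\eps^2)$ unordered pairs $\{\tr,\sigma\}$ of distinct vectors contributes $\Oh(n^2)$ intersection points, for a total of $\Oh(n^2/\eps^2)$. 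Adding the two bounds yields $\Oh(n^2/\eps) + \Oh(n^2/\eps^2) = \Oh(n^2/\eps^2)$, which is the claimed bound.

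Since essentially all the technical content has already been isolated in \rlem{convex}, \rlem{sstar}, and \rlem{nsquared}, there is no real obstacle here beyond bookkeeping. The points to be careful about are: that $|T_\eps| = \Theta(1/\eps)$ (using that $1/\eps$ is an integer, or conceding a constant factor otherwise, as discussed before the definition of $T_\eps$); that corner vectors are included so that $\eps$-segments straddling a corner are also captured; and that the disjointness argument for distinct cells is used so that intersections \emph{within} a single family need not be separately counted. With these in place the theorem follows immediately from the three lemmas.
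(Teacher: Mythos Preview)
Your proposal is correct and follows essentially the same approach as the paper: the theorem is deduced directly from \rlem{sstar} and \rlem{nsquared} by summing over the $\Oh(1/\eps)$ vectors and the $\Oh(1/\eps^2)$ pairs of vectors in $T_\eps$. You spell out a few details (the disjointness within a single family, the corner-vector coverage, the $\Oh(n^2/\eps)$ bend count) that the paper leaves implicit in its one-line derivation, but the substance is identical.
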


\section {Extensions}
\label {sec:ext}

Here we describe how to extend the general approach when placing a square on lines to different, more realistic settings.
We describe what happens when we replace the line arrangement with an arrangement of line \emph {segments}, denoted $\arr(\T)$.
Additionally, we describe how to adapt the approach when placing a circle rather than a square.

\subsection{Square on line segments}

\begin{wrapfigure}[10]{r}{0.3\textwidth}
  \vspace{-1.8\baselineskip}
    \includegraphics[width=0.3\textwidth]{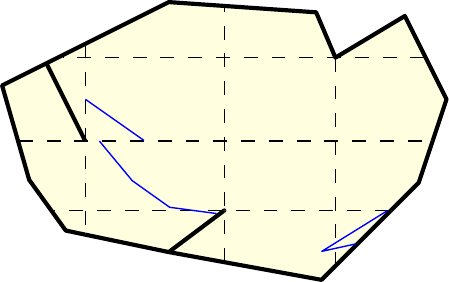}
    {\small \caption{A nonconvex cell $c$, its subdivision, and the (blue) curves $\Seg{c}{\tau}$.}
    \label{fig:linesegments}}
\end{wrapfigure}
The definition of an $\eps$-placement remains the same when we place a square amidst line segments,
and again we study the complexity of the $\eps$-placement space.
In a nonconvex cell $c$ of the arrangement, the curve $\Seg{c}{\tau}$ is no longer a piecewise linear convex
curve, but it can have several disconnected pieces; see Figure~5. 


\begin{theorem}
Given a set of line segments $\T$, the arrangement of $\eps$-placements of a unit square has complexity $\Oh(\frac{n^2}{\eps^2})$.
\end{theorem}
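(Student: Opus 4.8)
The plan is to reduce the line-segment case to the line case by a careful accounting of the extra structure that line segments introduce, while showing that the key combinatorial bounds of Lemmas~\ref{lem:sstar} and~\ref{lem:nsquared} survive. First I would re-examine the local structure of $\Seg{c}{\tau}$ inside a single cell $c$ of $\arr(\T)$. As Figure~5 indicates, a nonconvex cell can be subdivided into convex subcells (for instance by extending each segment endpoint vertically and horizontally, or by a trapezoidal-type decomposition) so that within each convex subcell the concavity argument of \rlem{convex} applies verbatim: the function $f(a) = x_a + y_a$ (distance to the nearest segment to the left plus distance to the nearest segment below, where now ``nearest segment'' means nearest segment whose vertical/horizontal slab contains $a$) is concave on each convex piece, so $\Seg{c}{\tau}$ restricted to that piece is again a piecewise-linear convex arc. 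The total number of such convex subcells, summed over all cells, is still $\Oh(n^2)$, since the arrangement of $n$ segments together with $\Oh(n)$ auxiliary vertical/horizontal rays through their endpoints has $\Oh(n^2)$ complexity. This is exactly the content that Figure~5 is meant to convey.

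Next I would re-run the complexity count of \rlem{sstar}. For a noncorner vector $\tau$, $\Segs{\tau}$ is still a union of axis-parallel segments, at most two per cell, so its complexity is $\Oh(n^2)$. For a corner vector, the complexity of $\Seg{c}{\tau}$ is now $\Oh(k)$ where $k$ counts the edges of $c$ \emph{plus} the auxiliary subdivision edges inside $c$, because each vertex of the curve corresponds to $\tau(p)$ aligning horizontally or vertically with a segment endpoint (an original vertex of the subdivided arrangement). Summing over all subcells still telescopes to $\Oh(n^2)$. Then \rlem{nsquared} goes through with essentially the same two-case argument: for distinct $\tau,\sigma \in T_\eps$ we form $\arr' = \arr\bigl(\tau^{-1}(\T) \cup \sigma^{-1}(\T)\bigr)$, which now also has $\Oh(n^2)$ complexity, and the case distinction (an endpoint of the relevant curve-piece lies in the current cell of $\arr'$, versus none does) is unchanged; in the first case we charge to the $\Oh(n^2)$ vertices of $\Segs{\tau}$, in the second to the $\Oh(n^2)$ edges of $\arr'$. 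One subtle point is that a single convex subcell of $\arr(\T)$ may now carry more than two components of the curve when counting across the original nonconvex cell, but since we do the bookkeeping at the level of convex subcells — where \rlem{convex}'s ``at most two components per convex region'' holds — this is not a problem. Finally, since $|T_\eps| = 4/\eps$, there are $\Oh(1/\eps^2)$ pairs of vectors, and summing the per-pair $\Oh(n^2)$ intersection bounds together with the per-vector $\Oh(n^2)$ curve complexities yields the claimed $\Oh(n^2/\eps^2)$ bound on the arrangement of $\eps$-placements.

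The main obstacle I expect is making the local subdivision of a nonconvex cell precise enough that (i) each piece is genuinely convex so \rlem{convex} applies, (ii) the ``nearest segment to the left/below'' functions are well-defined and piecewise linear on each piece (they can jump when one passes beyond the horizontal extent of a segment, which is exactly why the auxiliary vertical rays at segment endpoints are needed), and (iii) the total size of the subdivision stays $\Oh(n^2)$. Once the subdivision is set up correctly, every step above is a routine re-derivation of the line-case lemmas, so I would state the subdivision as a small preliminary lemma and then remark that Lemmas~\ref{lem:convex}, \ref{lem:sstar}, and \ref{lem:nsquared} carry over with $\arr$ replaced by the subdivided $\arr(\T)$, giving the theorem.
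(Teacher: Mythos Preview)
Your proposal is correct and follows essentially the same approach as the paper: subdivide each nonconvex cell of $\arr(\T)$ by shooting axis-parallel rays from every segment endpoint, observe that the resulting $\Oh(n^2)$ convex subcells let \rlem{convex} (and hence Lemmas~\ref{lem:sstar} and~\ref{lem:nsquared}) apply verbatim, and then sum over the $\Oh(1/\eps^2)$ vector pairs. The paper's proof is terser---it simply asserts that ``within each subcell the curve has the same properties as in a convex cell, and hence we can apply the same arguments as before''---whereas you spell out the re-verification of each lemma and the reason the auxiliary rays are needed, but the underlying argument is identical.
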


\begin{proof}
For analysis purposes, we imagine that any nonconvex cell $c$ of $\arr(\T)$ is partitioned into convex subcells
as follows.
For each endpoint of a line segment, shoot a ray in each orthogonal direction inside the cell
until it hits another line segment.
Use these rays to subdivide the cell into convex subcells, see Figure~5. 
If a cell $c$ has $k$ endpoints in its boundary, it is partitioned into $\Oh(k^2)$ subcells.
Within each subcell of $c$, the curve $\Seg{c}{\tau}$ has the same properties as in a
convex cell, and hence we can apply the same arguments as before.
The total number of endpoints is $\Oh(n)$, and hence the total number of subcells analyzed
in the arrangement $\arr(\T)$ is $\Oh(n^2)$. The bound follows.
\end{proof}

\subsection {Circle on lines}

We now place a unit circle, $\Ci_p$, on $\Li$, rather than a square.
We again define a set of translation vectors $T_\eps$ to points $\eps$-spaced on the boundary of $\Ci_p$, this time assuming $2\pi/\eps$ is an integer.
Consider cell $c$ in the arrangement $\arr$. The segments in $S(c,\tau)$ are no longer straight, but are generally elliptic.

\begin{figure}[t]
\begin{center}
\includegraphics[scale=0.7]{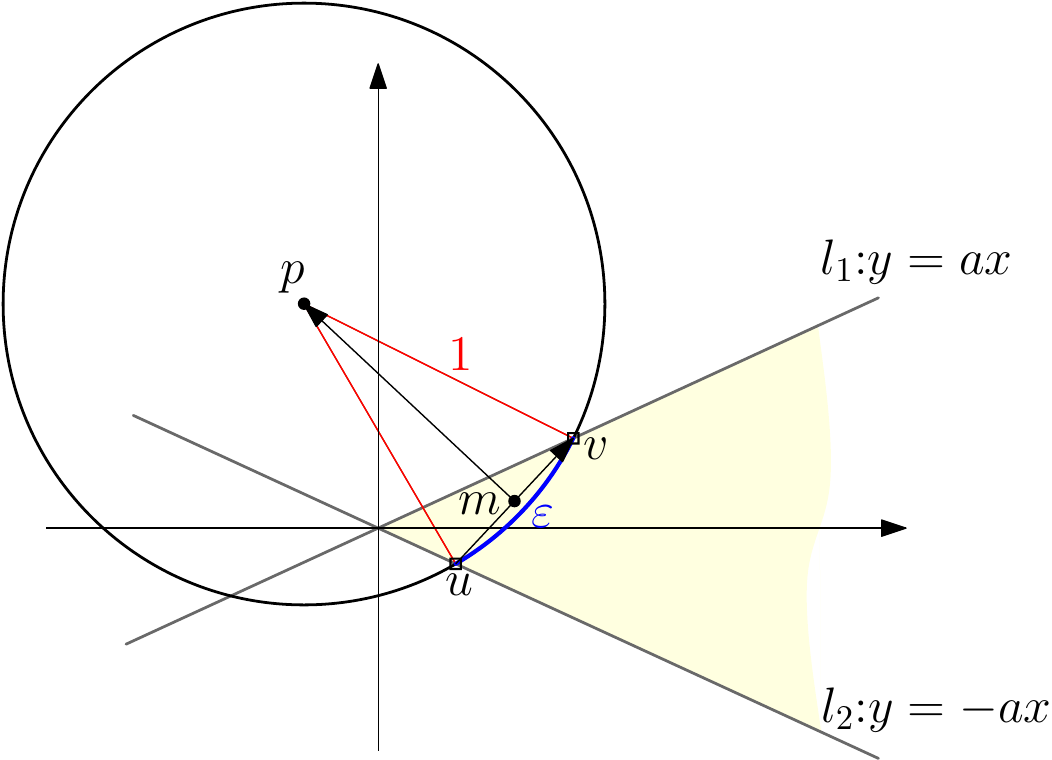}
\end{center}
\caption{Two lines $\ell_{1},\ell_{2}\in\Li$ and an $\eps$-placement of $p$ such that the corresponding $\eps$-segment is in the highlighted region.}
\label{fig:ellipse-deriv}
\end{figure}

\begin {lemma}
The $\eps$-placements of a circle such that the corresponding $\eps$-segments lie in $c$ form a collection of segments and elliptic arcs.
\end {lemma}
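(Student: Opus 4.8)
The plan is to mimic the structure of the proof of \rlem{convex}, replacing the two axis-aligned ``distance to nearest line'' functions with two functions that measure arc length along the circle in the two directions away from the point $\tr(p)$. First I would fix $\tr\in T_\eps$ and a cell $c$, and let $\tr(p)$ be a point on $\Ci_p$. Walking clockwise along $\Ci_p$ from $\tr(p)$, let $\alpha(p)$ be the arc length until the first intersection with a line of $\Li$; walking counterclockwise, let $\beta(p)$ be the arc length until the first intersection. Then the $\eps$-segment of $\Ci_p$ containing $\tr(p)$ has length $\alpha(p)+\beta(p)$, so the $\eps$-placements with $\tr(p)\in c$ are exactly the level set $\{\,p : \alpha(p)+\beta(p)=\eps\,\}$ restricted to the preimage of $c$. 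Since $p$ and $\tr(p)$ differ by a fixed translation, it is equivalent to describe the locus of $\tr(p)$ directly, which is what Figure~\ref{fig:ellipse-deriv} depicts.

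The core computation is to determine the shape of the curve $\alpha(p)=$ const, i.e. the locus of points $a$ on a circle of fixed radius at fixed clockwise arc-distance from the (variable) first-hit point on a fixed line $\ell$. Parametrize $a$ by its angle $\theta$ on the circle; the clockwise-nearest intersection with $\ell$ is then at angle $\theta+\phi$ for the appropriate $\phi$ determined by where a chord of the circle subtending arc of length (arc-distance) meets $\ell$. Fixing the arc length fixes the chord length, hence the perpendicular distance from the circle's center to $\ell$; this is a linear constraint on the center, i.e. on $p$. So the locus of $p$ for which the clockwise-first line is a given $\ell$ and the clockwise arc-distance to it equals a given value $t$ is a line segment. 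The same holds for the counterclockwise direction and another (possibly the same) line $\ell'$. Within a fixed cell $c$ of $\arr$, intersected appropriately with $\tr^{-1}$ and the analogous translate, both the clockwise-first line $\ell$ and the counterclockwise-first line $\ell'$ are fixed, and we need $t+t'=\eps$ with $t,t'$ the two arc-distances. Eliminating, we are left with a single equation relating the two perpendicular-distance (hence linear-in-$p$) quantities through the arc-length-to-chord-length correspondence, which is $\arccos$ of a linear function of $p$. Working out that this traces a segment or an elliptic arc is the content of the lemma, and it parallels how the $f_c=\eps$ level set was a line in the square case.

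Concretely, after restricting to the subregion where $\ell$ and $\ell'$ are the two relevant lines, I would set up coordinates, write $t = r\cdot(\text{angle subtended})$ where the half-angle satisfies $\cos(\text{half-angle}) = d_\ell/r$ with $d_\ell$ the signed distance from the center to $\ell$ (an affine function of $p$), and similarly $\cos(\text{half-angle}') = d_{\ell'}/r$; the constraint $t+t'=\eps$ then reads $\arccos(d_\ell/r)+\arccos(d_{\ell'}/r)=\eps/(2r)$. Taking cosines and using the angle-addition formula turns this into $d_\ell d_{\ell'} + \sqrt{(r^2-d_\ell^2)(r^2-d_{\ell'}^2)} = r^2\cos(\eps/2r)$ (up to sign conventions), which after isolating the square root and squaring becomes a quadratic equation in $d_\ell$ and $d_{\ell'}$, i.e. a conic in $p$. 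One checks the conic is bounded (it lives inside the strip $|d_\ell|\le r$, $|d_{\ell'}|\le r$), hence an ellipse, so each piece of $\Seg{c}{\tr}$ is an elliptic arc; the degenerate sub-cases (when $\tr(p)$ can reach a point where one side of the $\eps$-segment runs along a single line with no intervening intersection, or when $\ell$ and $\ell'$ coincide, or the geometry forces $d$ to stay extremal) collapse the conic to a line segment, matching the statement. Finally, within the whole cell $c$ one partitions by which ordered pair $(\ell,\ell')$ is relevant — these subregions are bounded by the translated lines themselves — and on each the above shows $\Seg{c}{\tr}$ is a segment or elliptic arc; taking the union over the finitely many subregions gives a finite collection of segments and elliptic arcs, as claimed.

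The main obstacle I anticipate is the case analysis and sign bookkeeping in the elimination step: $\arccos$ is multivalued on the relevant range only if one is careless, the ``clockwise-first line'' partition of the cell can be intricate, and one must be sure the squaring step does not introduce spurious branches — so the real work is verifying that each genuine piece is exactly one conic arc and correctly identifying when it degenerates to a segment. Everything else is a routine adaptation of the square-on-lines argument, with ``piecewise linear convex curve'' replaced by ``segment or elliptic arc'' and the concavity argument for $f_c$ replaced by the explicit conic derivation above.
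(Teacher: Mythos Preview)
Your approach is sound and reaches the right conclusion, but by a different route than the paper. The paper does not introduce $\tau$ or the arc-length functions $\alpha,\beta$ at all in this lemma: it simply fixes the two bounding lines, chooses symmetric coordinates so that $\ell_1:y=ax$ and $\ell_2:y=-ax$, and derives the ellipse directly from the chord constraint $|uv|^2=2-2\cos\eps$ together with $|up|=|vp|=1$, obtaining an explicit equation in $a$ and $\eps$. That explicit form is what the next lemma exploits to decide, via the sign of $a-\tan(\eps/2)$, whether a given arc is the convex or the concave half of its ellipse. Your more structural argument---reduce to $\arccos(\text{affine})\pm\arccos(\text{affine})=\text{const}$, take cosines, square, read off a conic---gives the same qualitative conclusion but does not hand you that convex/concave criterion for free, so you would have to redo some work for the follow-up lemma.

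One slip to fix: the displayed relation $\arccos(d_\ell/r)+\arccos(d_{\ell'}/r)=\eps/(2r)$ is not correct as written. The arc length $t$ from $\tau(p)$ to the relevant intersection with $\ell$ is not $r\arccos(d_\ell/r)$; the quantity $\arccos(d_\ell/r)$ is the angular distance from the \emph{perpendicular foot} to the intersection, so $t$ equals $r\arccos(d_\ell/r)$ plus a constant depending on the angle between $\tau$ and the normal of $\ell$. Fortunately this only shifts the right-hand side to some other constant $K$, and after taking cosines and squaring you still land on $d_\ell^2+d_{\ell'}^2-2C\,d_\ell d_{\ell'}=1-C^2$ with $C=\cos K$, whose quadratic form has determinant $1-C^2>0$ and is therefore an ellipse in $(d_\ell,d_{\ell'})$, hence in $p$ whenever $\ell,\ell'$ are not parallel. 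Note also that your stated reason for ellipticity (``the conic lives in $|d_\ell|,|d_{\ell'}|\le r$'') is not by itself enough---a hyperbolic arc can sit in a box---so it is cleaner to check positive definiteness directly as above.
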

\begin {proof}
Consider two lines $\ell_{1}$ and $\ell_{2}\in\Li$.
Consider a coordinate system with the origin in the intersection point of $\ell_{1}$ and $\ell_{2}$ and oriented such that $\ell_{1}$ has equation $y=ax$, and $\ell_{2}$ has equation $y=-ax$.
Consider an $\eps$-placement of a unit circle such that the $\eps$-segment is in the I- and IV-quadrants below line $\ell_{1}$ and above $\ell_{2}$ as in Figure~\ref{fig:ellipse-deriv}.

Let $u$ be the intersection point of the circle and line $\ell_{2}$, $v$ be the intersection of the circle and $\ell_{1}$, and $m$ be a middle point between $u$ and $v$. From the following equations:
\[
\begin{aligned}
|uv|^{2}&=2-2\cos{\eps}\,,\\
|up|&=1\,,\\
|vp|&=1\,,\\
\vec{up} \times \vec{mp}&=|up|\cdot |mp|\,,
\end{aligned}
\]
we can derive that the segment of the $\eps$-placement curve that corresponds to all such $\eps$-segments that have one end point on line $\ell_{1}$ and another end on $\ell_{2}$ is an arc of an ellipse given by the following equation (refer to Figure~\ref{fig:ellipse}):
\[
\frac{a^2 x^2}{\left(\sin \left(\frac{\varepsilon }{2}\right)-a \cos \left(\frac{\varepsilon }{2}\right)\right)^2}+\frac{y^2}{\left(a \sin \left(\frac{\varepsilon }{2}\right)+\cos \left(\frac{\varepsilon }{2}\right)\right)^2}=1\,.
\]
\mbox{ }
\end {proof}
From this analysis we see that a sequence of adjacent arcs is not necessarily convex:
when $a>\tan{\frac{\eps}{2}}$ the point $p$ is tracing the left arc of an ellipse (the convex part), when $a<\tan{\frac{\eps}{2}}$ point $p$ is tracing the right arc of an ellipse (the concave part), and when $a=\tan{\frac{\eps}{2}}$ the ellipse degenerates to a straight segment.
Nonetheless, we can show that this cannot happen arbitrarily often.

\begin {lemma}  \label {lem:constantconvexpieces}
  For $\eps < 1$, $S(c,\tau)$ consists of a constant number of piece-wise elliptic convex curves.
\end {lemma}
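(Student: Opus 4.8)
The plan is to bound the number of times the curve $S(c,\tau)$ can switch between tracing a convex (left) elliptic arc and a concave (right) elliptic arc as $p$ moves. From the derivation above, whether the arc traced by $p$ near a pair of lines $\ell_1,\ell_2$ is convex or concave is governed entirely by whether the (half-)slope parameter $a$ associated with the pair satisfies $a>\tan\frac{\eps}{2}$ or $a<\tan\frac{\eps}{2}$; equivalently, it is determined by the angle between the two lines of $\Li$ whose intersections with $\Ci_p$ are the current endpoints of the $\eps$-segment containing $\tau(p)$. As $p$ moves along $S(c,\tau)$ inside the fixed cell $c$, the pair of arrangement edges (equivalently, lines of $\Li$) bounding the relevant connected component of $\Ci_p\setminus\Li$ can change only when $\tau(p)$ passes a vertex of the boundary of $c$, i.e. when one of the two endpoints of the $\eps$-segment crosses a vertex of $\arr$ on $\Ci_p$. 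So each maximal sub-arc along which the bounding pair of lines is fixed is a sub-arc of a single fixed ellipse (or straight segment), hence convex; the number of such sub-arcs is at most one more than the number of such endpoint-crossings.

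First I would make precise the statement that, for a fixed cell $c$ and fixed vector $\tau$, the curve $\tau(S(c,\tau))$ inside $c$ has vertices only where $\tau(p)$ is at a fixed circular distance $\eps$ (measured along $\Ci_p$) from a vertex of $c$, or where the other endpoint of the $\eps$-segment hits a vertex of $c$. Because $c$ has constant complexity contributions per relevant feature and $\eps<1$ guarantees the $\eps$-arc subtends an angle less than $1<\pi$ (so its two endpoints lie on a well-defined pair of edges and the geometry above applies without wraparound), the number of such crossing events along $S(c,\tau)$ restricted to $c$ is bounded by a constant depending only on the local complexity of $c$ — and in fact, summed appropriately, it is the analogue of Lemma~\ref{lem:convex}'s "$4$ boundary values" bound, giving a constant number of pieces.

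Next I would argue that between two consecutive such events the bounding pair of lines $(\ell_1,\ell_2)$ is constant, so $p$ traces a connected sub-arc of the single fixed ellipse determined by that pair via the displayed equation; such a sub-arc is convex (being part of one branch of one conic). Therefore $S(c,\tau)$ is a concatenation of a constant number of convex elliptic arcs (and straight segments in the degenerate case $a=\tan\frac{\eps}{2}$), which is exactly the claim.

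The main obstacle I anticipate is handling the bookkeeping at the transition points cleanly: showing that the finitely many "event" placements — where an endpoint of the moving $\eps$-segment slides over a vertex of $\arr$ while $\tau(p)$ stays in $c$ — are indeed the \emph{only} places where convexity can break, and that there are only constantly many of them per cell rather than $\Oh(k)$ of them (the latter would only give the bound we already need for the global count, but the lemma as stated claims a \emph{constant}, so I would need to be careful that this constant is genuinely independent of the cell's complexity, using that $\tau(p)$ is pinned to a single cell $c$ and the $\eps$-arc, being short, can only straddle a bounded number of edges of $\arr$). A secondary technical point is ruling out, via $\eps<1$, degenerate configurations where the $\eps$-segment wraps far enough around $\Ci_p$ that its endpoints cease to be governed by a single clean pair of lines, so that the ellipse derivation of the previous lemma applies verbatim throughout.
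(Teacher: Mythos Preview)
Your proposal has a genuine gap. You correctly observe that as $p$ moves along $S(c,\tau)$ the pair of bounding lines changes only at ``events'' where an endpoint of the $\eps$-segment passes a vertex of $c$, and that between events $p$ traces an arc of a single ellipse. But then two things go wrong. First, you assert that ``such a sub-arc is convex (being part of one branch of one conic)''; this contradicts your own opening sentence, which (correctly, following the preceding lemma) says the arc is convex only when $a>\tan\frac{\eps}{2}$ and \emph{concave} when $a<\tan\frac{\eps}{2}$. An elliptic arc is of course convex in isolation, but the relevant question is whether it bends the right way relative to the orientation of the concatenated curve $S(c,\tau)$, and for sharp-enough vertex angles it does not. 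Second, and more importantly, the number of events is $\Theta(k)$ where $k$ is the complexity of $c$, not a constant: as $p$ sweeps so that $\tau(p)$ traverses $c$, the two endpoints of the $\eps$-segment slide over \emph{every} edge of $c$. Your suggested rescue --- that the short $\eps$-arc ``can only straddle a bounded number of edges of $\arr$'' --- bounds the number of edges touched at any \emph{single} instant, not the total number of edge-changes over the whole motion.

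The idea you are missing is the one the paper uses: do not try to bound the number of arcs, but rather the number of \emph{concave} arcs. An arc is concave exactly when the corresponding vertex of $c$ has interior angle below $\pi-\tan\frac{\eps}{2}$ (equivalently exterior angle above $\tan\frac{\eps}{2}$). Since $c$ is convex, its exterior angles sum to $2\pi$, so for $\eps<1$ at most a constant number of vertices (in fact at most two, the paper argues) can be sharp enough to produce a concave arc. Removing those constantly many concave arcs leaves the remaining arcs in at most a constant number of chains, each of which is a piecewise-elliptic convex curve; combined with the (separate) argument that $S(c,\tau)$ meets any line of direction $\tau$ at most twice, this gives the lemma.
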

\begin {proof}
  Consider the sequence of angles $a_1, a_2, \ldots, a_k$ between consecutive edges of $c$. Because $c$ is convex, we know that $\sum_{i=1}^k(\pi-a_i) = 2\pi$. For sufficiently small $\eps$ ($\eps < 1$ suffices), this implies that no more than two angles $a_i$ can be smaller than $\pi - \tan{\frac{\eps}{2}}$.

  By the previous lemma, these at most two angles correspond to concave elliptic arcs, which we report as separate curves. The remainder of the arcs and segments formed by boundary of $c$ is subdivided by these gaps into at most two piece-wise elliptic convex curves.

  As in the case for squares (refer to Lemma~\ref {lem:convex}), at most two disconnected curves, each of which can be decomposed into at most four convex subcurves, may exist for any given $\tau$.
  To see this, consider any line with direction $\tau$ that intersects $c$ in a segment $s$.
  As we slide $\tau(p)$ along $s$, the length of the boundary piece of $\Ci_p$ containing the point $\tau(p)$ in $c$ changes as a concave function. Hence, the piece is an $\eps$-segment at most twice.
  (Note that, in contrast to the square case, it is essential that $s$ has orientation $\tau$.)
\end {proof}

\begin{figure}[t]
\begin{center}
\includegraphics[scale=.45]{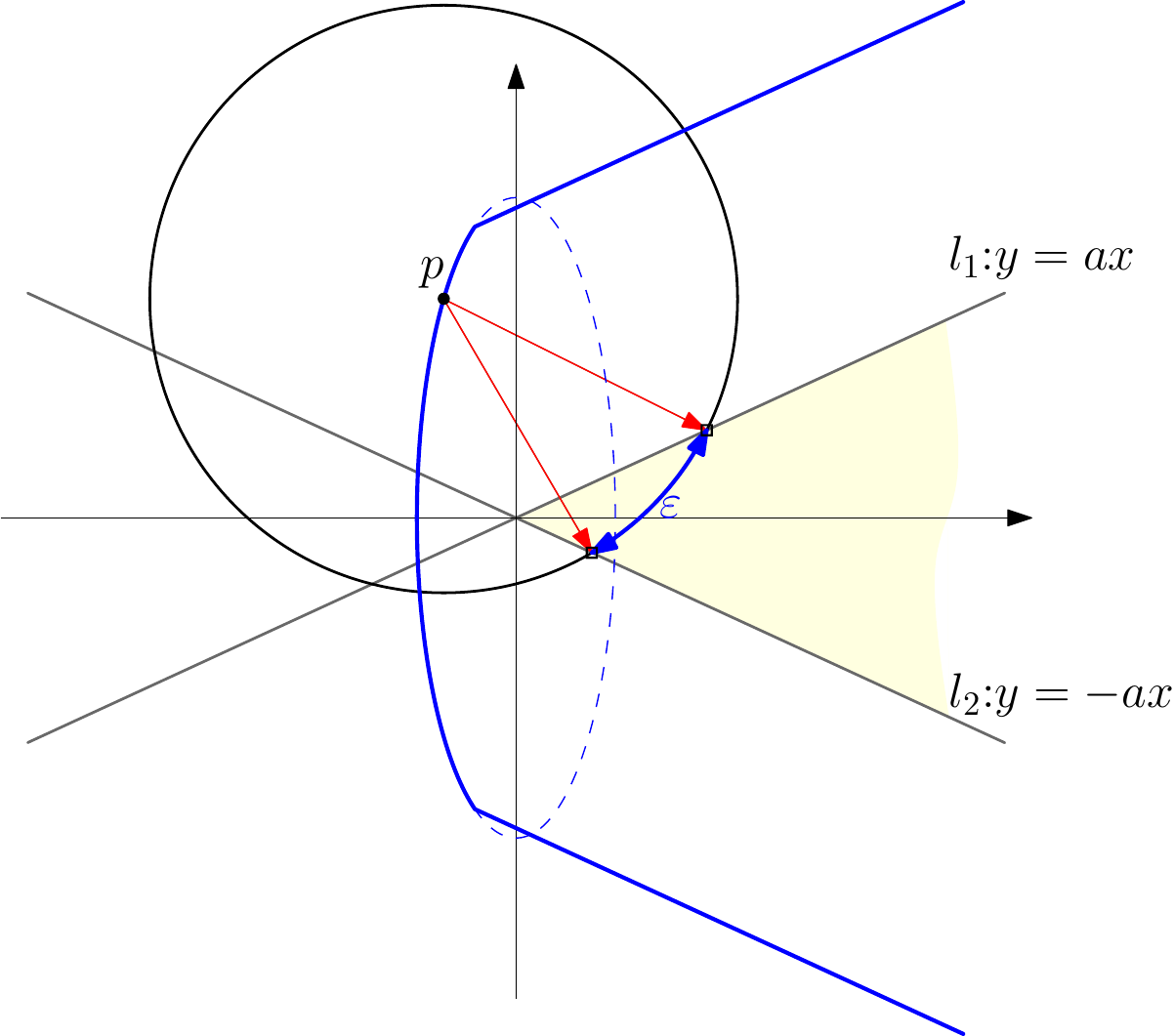}
\hfill
\includegraphics[scale=.45]{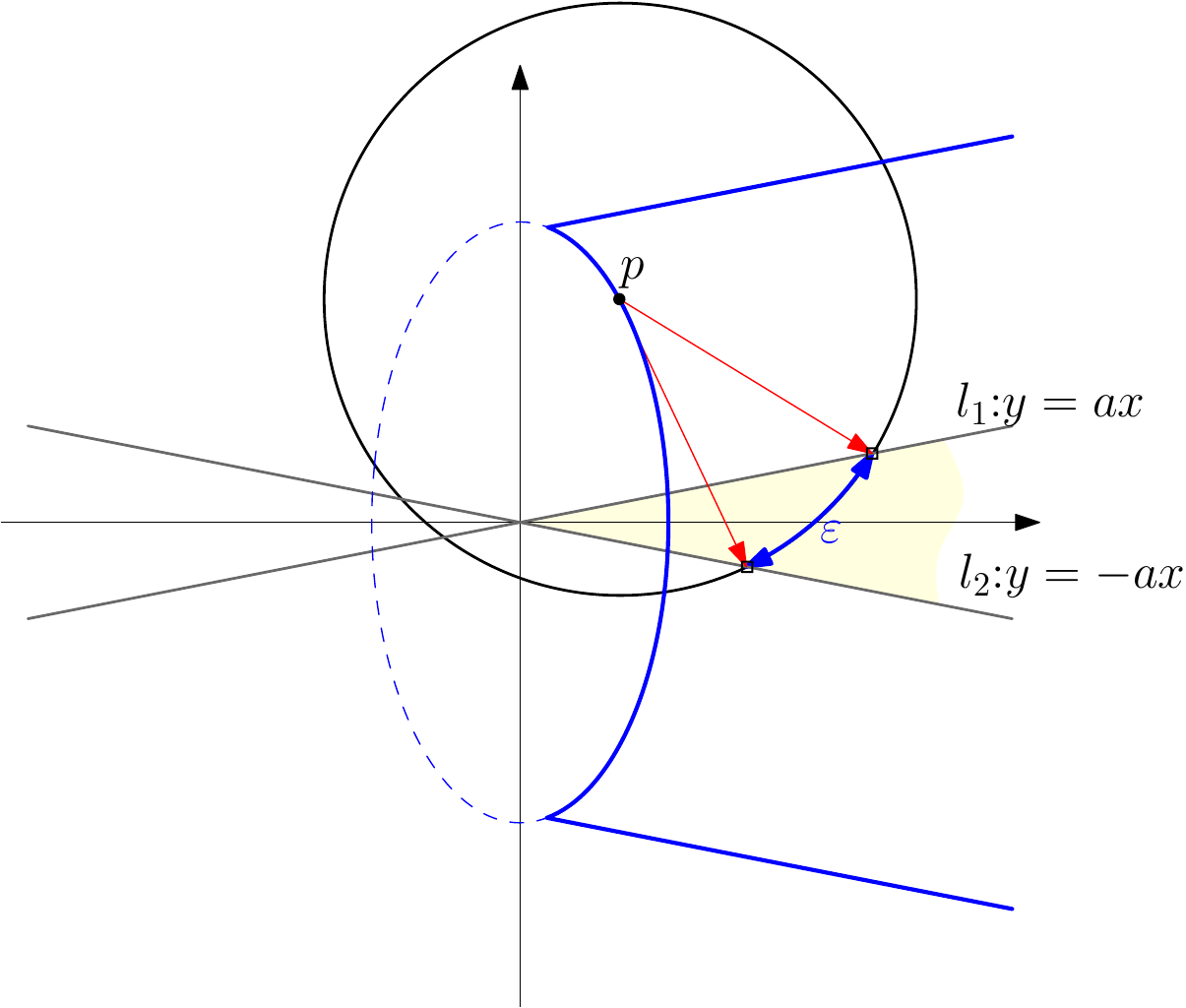}
\end{center}
\caption{Two lines $\ell_{1},\ell_{2}\in\Li$ and the curve (blue) that $p$ traces over all $\eps$-placements such that the corresponding $\eps$-segments are within the highlighted region.}
\label{fig:ellipse}
\end{figure}


To bound the complexity of the placement space, it is sufficient to bound the number of intersection points between $S(\tau)$ and $S(\sigma)$.

\begin{lemma}
For distinct $\tr, \sigma \in T_\eps$, the intersection of $\Segs{\tr}$ and $\Segs{\sigma}$ contains $\Oh(n^2)$ points.
\end{lemma}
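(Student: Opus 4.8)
The plan is to mimic the argument of \rlem{nsquared} from the square-on-lines case, working in the refined arrangement $\arr' = \arr\left(\tr^{-1}(\Li)\cup\sigma^{-1}(\Li)\right)$, which still has $\Oh(n^2)$ cells, edges, and vertices. For an intersection point $p \in \Segs{\tr}\cap\Segs{\sigma}$, let $c$ be the cell of $\arr'$ containing $p$, let $c_\tr$ and $c_\sigma$ be the cells of $\arr$ containing $\tr(p)$ and $\sigma(p)$, and note again that $c = \tr^{-1}(c_\tr)\cap\sigma^{-1}(c_\sigma)$. By \rlem{constantconvexpieces}, each of $\Seg{c_\tr}{\tr}$ and $\Seg{c_\sigma}{\sigma}$ decomposes into a constant number of piecewise-elliptic convex curves. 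First I would fix inside $c$ the single convex subcurve of $\Seg{c_\tr}{\tr}$ and the single convex subcurve of $\Seg{c_\sigma}{\sigma}$ on which $p$ lies, and split into the same two cases as before: either an endpoint of one of these subcurves lies in $c$, or neither does.

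In the first case (an endpoint of, say, the relevant subcurve of $\Seg{c_\tr}{\tr}$ lies in $c$), I would charge $p$ to that endpoint. Since a convex piecewise-elliptic curve meets another convex piecewise-elliptic curve in $\Oh(1)$ points — two conics intersect in at most four points, and the constant number of elliptic pieces keeps this bounded — the relevant subcurve of $\Seg{c_\tr}{\tr}$ meets all of $\Segs{\sigma}$ inside $c$ only $\Oh(1)$ times, because $\Seg{c_\sigma}{\sigma}$ is the only part of $\Segs{\sigma}$ present in $c$. The number of endpoints of convex subcurves is $\Oh(1)$ per cell of $\arr$, hence $\Oh(n^2)$ in total, bounding intersections of this kind.

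In the second case (no subcurve endpoint in $c$), I would argue exactly as in \rlem{nsquared}: take an edge $e$ of $c$ in $\arr'$ crossed by the relevant subcurve of $\Seg{c_\tr}{\tr}$; such an $e$ must be an edge of $c_\sigma$ translated by $\sigma^{-1}$, hence it meets $\Seg{c_\tr}{\tr}$ — an $\Oh(1)$-complexity convex elliptic curve — in $\Oh(1)$ points. Charging $p$ to such an edge of $\arr'$ and summing over all $\Oh(n^2)$ edges bounds the intersections of this kind. Combining the two cases yields the $\Oh(n^2)$ bound.

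The main obstacle I anticipate is the bookkeeping hidden in \rlem{constantconvexpieces}: I need that $\Seg{c}{\tau}$ is a bounded number of \emph{convex elliptic} arcs (so that two of them meet $\Oh(1)$ times) rather than just a bounded number of convex curves of unbounded elliptic complexity. I would check that each convex subcurve is in fact a single elliptic arc (the ellipse depending only on the pair of lines through the cell's two relevant edges), so that "convex piecewise-elliptic curve of constant complexity" is justified, and that the $\eps<1$ hypothesis — needed so that at most two of the cell's angles are small — carries through, since that is what guarantees the constant bound on the number of concave (hence separating) arcs. Everything else is a direct transcription of the polygonal argument with "line segment" replaced by "convex elliptic arc of $\Oh(1)$ complexity."
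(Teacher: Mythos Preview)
Your instinct about where the obstacle lies is exactly right, but the check you propose would fail. A convex subcurve of $\Seg{c}{\tau}$ is \emph{not} a single elliptic arc: the ellipse on which $p$ moves is determined by the pair of lines carrying the two endpoints of the $\eps$-segment, and that pair changes every time one endpoint slides past a vertex of $c$. Hence a single convex piece of $\Seg{c}{\tau}$ consists of $\Theta(|c|)$ elliptic arcs in general, where $|c|$ is the number of edges of $c$. Lemma~\ref{lem:constantconvexpieces} only bounds the number of convex \emph{pieces}, not their elliptic complexity.

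This breaks both of your cases. In Case~1 you assert that two convex piecewise-elliptic curves meet in $\Oh(1)$ points; that is false once each curve may have many arcs (two nearly coincident convex $k$-gons already show $\Theta(k)$ crossings). In Case~2 the edge $e$ does meet $\Segs{\tau}$ only $\Oh(1)$ times (a line meets a convex curve twice), but the quantity you need to bound per edge is the number of $\tau$--$\sigma$ intersections in the incident cell, and that again can be large for the same reason.

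The paper's proof avoids the two-case split from the square setting entirely. It uses the fact that a convex curve made of $k$ elliptic arcs and a convex curve made of $h$ elliptic arcs intersect in $\Oh(k+h)$ points (each intersection can be charged to an arc endpoint of one curve or the other, up to a constant). It then charges each intersection to an elliptic arc of $\Segs{\tau}$ or $\Segs{\sigma}$, observes that each arc is charged $\Oh(1)$ times (it lies in one cell $c_\tau$, which meets a constant number of convex pieces of the other curve), and sums over the $\Oh(n^2)$ arcs in total. Your framework can be repaired along the same lines, but not by assuming constant elliptic complexity per convex piece.
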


\begin {proof}

Let $\arr'$ be the overlay of $\tr^{-1}(\arr)$ and $\sigma^{-1}(\arr)$: the arrangement of two copies of $\Li$ translated
by the inverses of $\tau$ and $\sigma$. Let $c$ be a cell in $\arr'$,
let $c_\tr$ and $c_\sigma$ denote the cells in $\arr$ which respectively
contain $\tr(p)$ and $\sigma(p)$. Observe that $c = \tr^{-1}(c_\tr) \cap \sigma^{-1}(c_\sigma)$.

Now, by Lemma~\ref {lem:constantconvexpieces}, $c_\tr$ and $c_\sigma$ both contain a constant number of (pieces of) convex curves. Each curve piece itself may consist of many elliptic arcs.
We observe that a convex curve, consisting of $k$ elliptic arcs, and a second convex curve, consisting of $h$ elliptic arcs, may cause at most $\Oh(k+h)$ intersections. We thus charge the intersections to the pieces of $S(\tau)$ or $S(\sigma)$, and note that each piece is charged at most constantly often.
\end {proof}

As in the case of squares, we have $\Oh(1/\eps^2)$ distinct translation vectors, so we can bound the total complexity by $\Oh(n^2/\eps^2)$.

\begin{theorem}
Given a set of $n$ lines $\Li$, the arrangement of $\eps$-placements of a unit circle
has complexity $\Oh(\frac{n^2}{\eps^2})$.
\end{theorem}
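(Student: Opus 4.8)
The plan is to combine the three preceding lemmas in exactly the way the square case was handled, so the proof is essentially a bookkeeping argument. The target statement asserts that the $\eps$-placement space of a unit circle on $n$ lines has complexity $\Oh(n^2/\eps^2)$. The placement space is the overlay of the curve collections $\Segs{\tr}$ over all $\tr\in T_\eps$, so its complexity is bounded by (i) the total complexity of the individual $\Segs{\tr}$, summed over the $\Oh(1/\eps)$ vectors $\tr$, plus (ii) the total number of pairwise intersections between $\Segs{\tr}$ and $\Segs{\sigma}$, summed over the $\Oh(1/\eps^2)$ pairs. I would state this decomposition first, citing the standard fact that an arrangement of curves has complexity linear in the number of curve pieces plus the number of intersections.

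For part (i), I would argue that $\Segs{\tr}$ has complexity $\Oh(n^2)$ for each fixed $\tr$. Fix a cell $c$ of $\arr$ with $k$ edges; by \rlem{constantconvexpieces}, $\Seg{c}{\tr}$ is a constant number of piecewise-elliptic convex curves, and the total number of elliptic arcs is $\Oh(k)$ because each breakpoint of the curve occurs where $\tr(p)$ passes a vertex or edge transition of $c$ (one arc per edge of $c$ that the curve sweeps past). Summing $\Oh(k)$ over all cells of $\arr$ gives $\Oh(n^2)$; multiplying by $\Oh(1/\eps)$ vectors gives $\Oh(n^2/\eps)$, which is dominated by the target bound. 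For part (ii), I would invoke the intersection lemma directly: for each of the $\Oh(1/\eps^2)$ pairs $\tr\neq\sigma$, the lemma proved just above gives $\Oh(n^2)$ intersection points between $\Segs{\tr}$ and $\Segs{\sigma}$, for a total of $\Oh(n^2/\eps^2)$. Adding the two contributions yields the claimed $\Oh(n^2/\eps^2)$ bound.

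The main subtlety — and the step I would be most careful about — is the per-cell complexity count for $\Segs{\tr}$ in part (i): in the circle case a single convex piece of $\Seg{c}{\tr}$ is not one arc but a chain of $\Oh(k)$ elliptic arcs, one per edge of $c$ it traverses, and one must make sure these are charged cleanly to the edges of $c$ without double-counting across the (at most constant number of) convex pieces and across the at most two disconnected components. This is the analogue of the observation in \rlem{sstar} that each vertex of $\Seg{c}{\tr}$ aligns with a vertex of $c$; for circles the analogue is that each breakpoint corresponds to an intersection event of $\Ci_p$ with a line bounding $c$, and since the circle has unit size it meets each such line in $\Oh(1)$ points, giving $\Oh(k)$ breakpoints per cell. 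Once this charging is in place, the rest is the same summation as in the square case, and I would end by noting, as the text already does, that there are $\Oh(1/\eps^2)$ distinct translation vectors so the total complexity is $\Oh(n^2/\eps^2)$.
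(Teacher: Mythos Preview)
Your proposal is correct and follows exactly the paper's approach: the paper's proof of this theorem is literally the one sentence ``As in the case of squares, we have $\Oh(1/\eps^2)$ distinct translation vectors, so we can bound the total complexity by $\Oh(n^2/\eps^2)$,'' relying on the preceding intersection lemma for the $\Oh(n^2)$ per pair; your parts (i) and (ii) spell out this same decomposition, and your part (i) is in fact more explicit than the paper, which never states a circle analogue of \rlem{sstar} (its intersection lemma implicitly needs the total arc count to be $\Oh(n^2)$, which you supply). One small tightening: in your charging for part (i), the breakpoints of $\Seg{c}{\tr}$ occur when an endpoint of the $\eps$-segment crosses a \emph{vertex} of $c$ (so at most $\Oh(k)$ such events along the curve), not because the circle meets a bounding line in $\Oh(1)$ points at a fixed instant---but your $\Oh(k)$-per-cell conclusion is correct either way.
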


\section{Lower bounds}
\label {sec:lower}

\begin{figure}[t]
\begin{center}
\includegraphics[width=0.4\columnwidth]{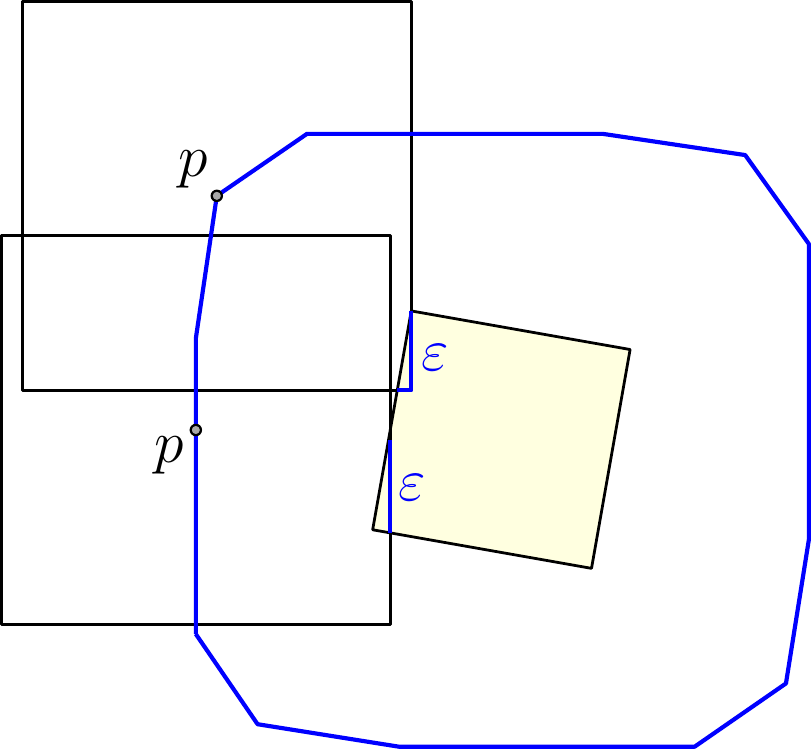}
\hfil
\includegraphics[width=0.4\columnwidth]{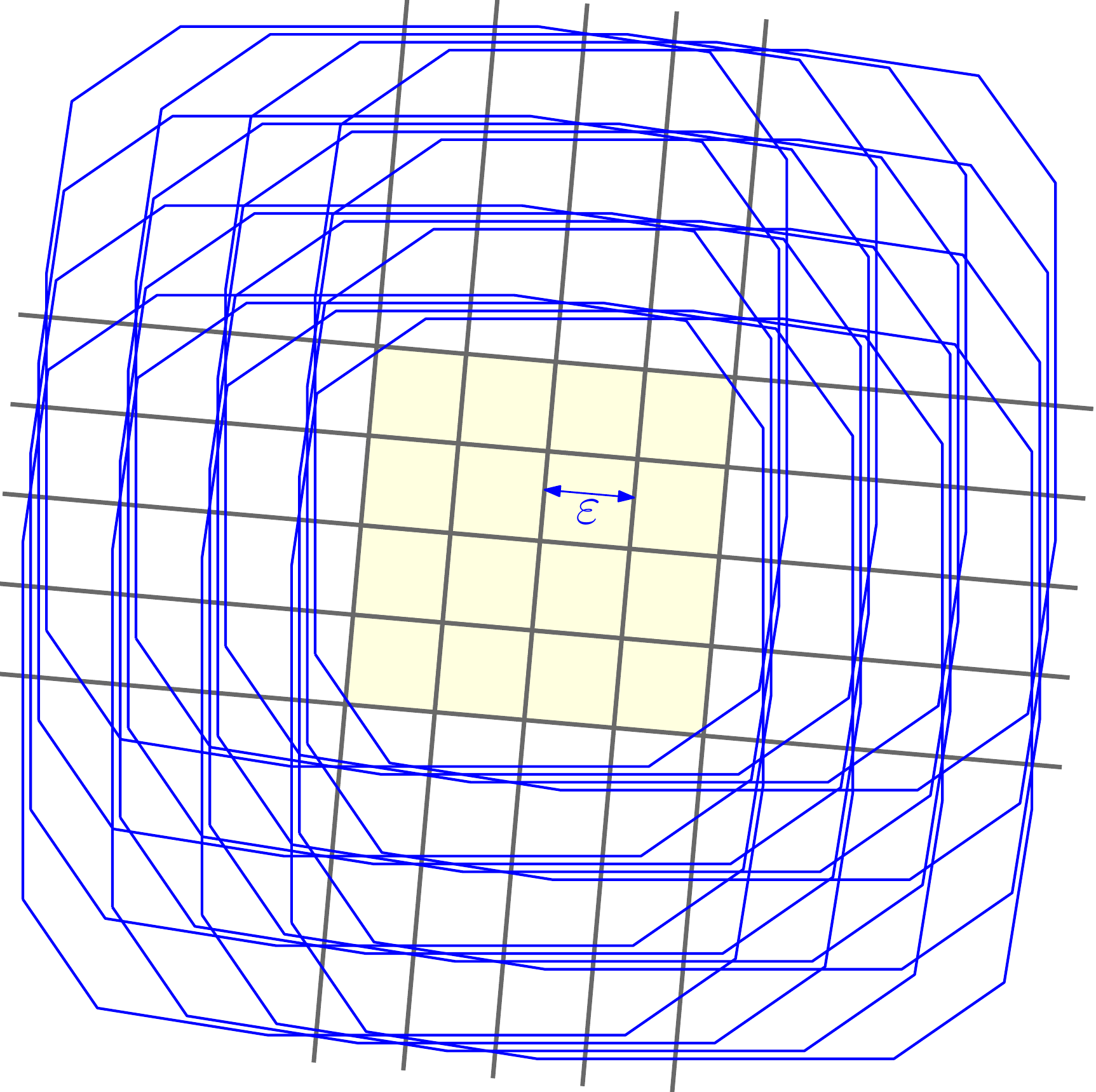}
\end{center}
\caption{
Lower bound construction for squares.
(left) Point $p$ traces an ``offset'' around square cell.
(right) A $\frac{n}{2}\times \frac{n}{2}$ grid formed by lines from $\Li$. Each cell has size $\Oh(\eps)$.}
\label{fig:square-lower-bound}
\end{figure}


By a worst case construction we show a tight bound on the complexity of the placement space of a square (the construction for placing a circle is analogous).
Consider a (slightly tilted) square cell of size $\Oh(\eps)$. The curve traced by $p$ of all $\eps$-placements of a unit side square such that an $\eps$-segment is inside the cell is shown in \rfig{square-lower-bound}.
It forms an ``offset'' curve around the cell of width $\approx 2$.
Place $\frac{n}{2}$ almost horizontal lines and $\frac{n}{2}$ almost vertical lines to form a grid with cells of size $\Oh(\eps)$ (consider all lines slightly tilted). 
\rfig{square-lower-bound} shows this construction.

%

A trivial upper bound\footnote{If $\eps$ is arbitrarily small,
the arrangement of the placement space is a collection of $n^2$ unit squares which can intersect pairwise.} on the complexity of the placement space is $\Oh(n^4)$, which by our
construction is worst case tight if $\eps = \Oh(\frac{1}{n})$.
If $\eps$ is bounded from below, the parametrized complexity of the arrangement is also worst-case tight.

\begin{cor}
Given a set of $n$ lines $\Li$ and a parameter $\eps = \Omega(\frac{1}{n})$, the arrangement of $\eps$-placements has a worst case complexity $\Omega(\frac{n^2}{\eps^2})$.
\end{cor}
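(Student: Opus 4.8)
The plan is to exhibit an explicit arrangement of $n$ lines achieving $\Omega(n^2/\eps^2)$ vertices in the $\eps$-placement arrangement, using the ``offset curve'' phenomenon sketched in \rfig{square-lower-bound} as the basic gadget. First I would make precise the claim that a single small square cell of side $\Theta(\eps)$ contributes a closed curve traced by $p$ that looks like the boundary of that cell ``inflated'' by the unit square: as the unit square slides so that one of its four sides carries an $\eps$-segment lying against a fixed edge of the small cell, $p$ moves along a segment parallel to that edge at distance $\approx \tfrac12$; near the corners of the small cell the $\eps$-segment straddles a corner of the unit square and $p$ traces a short convex arc (cf.\ \rlem{convex}). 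Gluing these four segments and four corner pieces yields one closed convex-cornered curve $\gamma_c$ of diameter $\approx 2$, essentially an axis-aligned rounded square of side $\approx 2$ centered near the small cell $c$.

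Next I would set up the grid. Take $\tfrac n2$ nearly-horizontal and $\tfrac n2$ nearly-vertical lines (all slightly tilted to guarantee general position, so that the hypotheses of the earlier sections hold) spaced so that the $\tfrac n2 \times \tfrac n2$ grid they form has every cell of size $\Theta(\eps)$; by the constraint $\eps = \Omega(1/n)$ the whole grid then fits in a region of size $\Theta(n\eps) = \Theta(1)\cdot\Theta(n\eps)$, but crucially the cells themselves can all be made $\Theta(\eps)$. Each of the $\Theta(n^2)$ grid cells $c$ generates its curve $\gamma_c$ of diameter $\approx 2$. Two such curves $\gamma_c$ and $\gamma_{c'}$ intersect in $\Omega(1)$ points whenever the centers of $c$ and $c'$ are within distance roughly $2$ of each other in the appropriate (anisotropic) sense — i.e.\ when the rounded squares of side $2$ around them overlap transversally. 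Since consecutive grid cells are only $\Theta(\eps)$ apart, a ball of radius $\approx 1$ around any cell center contains $\Omega(1/\eps)$ other cell centers in each of the two grid directions, hence $\Omega(1/\eps^2)$ cells whose curves cross $\gamma_c$. Summing over all $\Theta(n^2)$ cells and dividing by $2$ for double counting gives $\Omega(n^2/\eps^2)$ intersection vertices, and a careful tilt of the lines makes these intersections pairwise distinct and transversal, so they are genuine vertices of the $\eps$-placement arrangement. This proves the $\Omega(n^2/\eps^2)$ bound, and combined with the matching $\Oh(n^2/\eps^2)$ upper bound from the theorem it is worst-case tight; when additionally $\eps = \Theta(1/n)$ this is $\Omega(n^4)$, matching the trivial upper bound.

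The main obstacle I anticipate is making the intersection count between two offset curves $\gamma_c, \gamma_{c'}$ rigorous and uniform: I must verify that the curves are genuinely translates (up to negligible lower-order terms coming from the $\Theta(\eps)$ size and the small tilt) of one fixed rounded-square shape of side $\approx 2$, so that ``overlap of the two rounded squares'' really does force a bounded-below number of transversal crossings, and that the $\Theta(\eps)$-scale perturbations and tilts do not accidentally make two nearly-coincident curves tangent or disjoint. I would handle this by choosing the tilt angles and the exact spacings generically (a measure-zero set of bad choices is avoided), and by a compactness/continuity argument: the ``idealized'' configuration of infinitely many coincident rounded squares of side $2$ on an $\eps$-grid has the stated number of transversal crossings with positive crossing angles bounded away from $0$, and small perturbations preserve transversality and the crossing count. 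The remaining steps — counting grid cells within distance $1$, and the $\Oh(n^4)$ remark — are routine.
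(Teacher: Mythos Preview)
Your proposal is correct and follows essentially the same construction as the paper: a slightly tilted $\frac n2\times\frac n2$ grid with $\Theta(\eps)$-sized cells, each contributing an offset curve of diameter $\approx 2$, so that every cell's curve crosses those of the $\Omega(1/\eps^2)$ nearby cells. Your write-up is in fact more detailed than the paper's own sketch; the only imprecision is your stated reason for needing $\eps=\Omega(1/n)$ --- it is not about making the cells $\Theta(\eps)$ (that is always possible) but about ensuring the grid extends at least $\Theta(1)$ in each direction so that a typical cell really does have $\Omega(1/\eps)$ neighboring columns and rows within reach of its offset curve.
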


\section{Computation}

To compute the critical placement of a square on $n$ lines $\Li$, we first compute $\arr(\Li)$ in $O(n^2)$ time.
Next, we traverse all cells $c$ in $\arr$ and calculate $\cup_{\tr \in T_\eps} \Seg{c}{\tr}$.
We do this by scanning the boundary of $c$, taking linear time in the complexity of $c$;
therefore, in total we spend $\Oh(n^2)$ time finding all $\Oh(n^2)$ critical segments.
Given this set of line segments we can compute their arrangement with standard techniques \cite{chazelle1992optimal,h-a-97,mulmuley},
giving an $\Oh(n^2 \log n + k)$ running time for output size $k$.
\begin{theorem}
An arrangement of critical placements of a unit axis-aligned square among a set of $n$ lines $\Li$ can be computed in $\Oh(n^2 \log n+k)$ time, where $k$ is the output size.
\end{theorem}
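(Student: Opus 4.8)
The plan is to establish the running time in three phases, following the structure already sketched in the Computation section: build the line arrangement, extract all critical segments by a local boundary traversal of each cell, and then overlay those segments using an output-sensitive arrangement construction. First I would compute $\arr(\Li)$ explicitly, with its doubly-connected edge list, in $\Oh(n^2)$ time and space using a standard sweep or incremental construction; this gives us, for each cell $c$, its boundary as a cyclic sequence of edges and vertices, and $\sum_c |\partial c| = \Oh(n^2)$.

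Next I would walk the boundary of each cell $c$ once and compute $\bigcup_{\tr \in T_\eps}\Seg{c}{\tr}$ restricted to that cell. The key observation making this linear in $|\partial c|$ (rather than $|\partial c|/\eps$) is that although there are $\Theta(1/\eps)$ vectors $\tr$, for the non-corner vectors $\Seg{c}{\tr}$ is just one or two axis-parallel chords of $c$ at distance determined by the nearest bounding edges, so the relevant data — the leftmost/rightmost and top/bottom supporting edges as a function of position — can be read off in a single pass; and for the four corner vectors, $\Seg{c}{\tr}$ is the level set $f_c = \eps$ of the piecewise-linear concave function $f_c$ from \rlem{convex}, whose $\Oh(|\partial c|)$ breakpoints (where $\tr(p)$ aligns with a vertex of $c$, as noted in the proof of \rlem{sstar}) are again obtained by scanning $\partial c$. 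So each cell contributes $\Oh(|\partial c|)$ critical segments in $\Oh(|\partial c|)$ time, for $\Oh(n^2)$ segments total in $\Oh(n^2)$ time, matching \rlem{sstar}. For line segments rather than lines one would additionally subdivide nonconvex cells into the $\Oh(k^2)$ convex subcells of the earlier proof, still $\Oh(n^2)$ total.

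Finally, given the collection of $\Oh(n^2)$ (axis-parallel, or in the general-position statement, $\Oh(1/\eps^2)$-orientation) segments, I would invoke a standard output-sensitive line-segment arrangement algorithm — Chazelle--Edelsbrunner or Mulmuley or Har-Peled, as cited — which constructs the arrangement of $m$ segments with $k$ intersection points in $\Oh(m\log m + k)$ time. Here $m = \Oh(n^2)$ and, by the Theorem of the previous section, $k = \Oh(n^2/\eps^2)$, but since we want an output-sensitive bound we keep $k$ as the true output complexity; the total is $\Oh(n^2\log n + k)$. The main obstacle — and the only part needing care beyond citing known results — is the middle phase: verifying that a single boundary traversal really suffices to generate all $\Oh(|\partial c|)$ pieces of all $\Oh(1/\eps)$ curves $\Seg{c}{\tr}$ in time $\Oh(|\partial c|)$ and not $\Oh(|\partial c|/\eps)$, i.e.\ that the combinatorial structure (which edges bound the $\eps$-segment through $\tr(p)$) changes only $\Oh(|\partial c|)$ times as $\tr$ ranges over $T_\eps$ and $\tr(p)$ moves within $c$. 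This follows from the convexity and piecewise-linearity already established, but it is the step I would write out most carefully.
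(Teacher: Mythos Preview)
Your proposal follows exactly the paper's three-phase approach: build $\arr(\Li)$ in $\Oh(n^2)$ time, extract $\bigcup_{\tr\in T_\eps}\Seg{c}{\tr}$ by a linear boundary scan of each cell, then apply an output-sensitive segment-arrangement algorithm. You expand considerably on the middle phase, which the paper dispatches in a single sentence; your identification of the per-cell $\Oh(|\partial c|)$ versus $\Oh(|\partial c|/\eps)$ issue is a legitimate concern that the paper simply asserts away, so your more careful treatment is appropriate, but the overall argument is the same.
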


\noindent
{\bf Extensions.}
When placing a square over line segments $\T$, we first compute $\arr(\T)$.
With this arrangement we can find the subdivisions as in \rfig{linesegments},
and compute $\cup_{\tr \in T_\eps} \Seg{c}{\tr}$ for subdivisions $c$ by
scanning their convex boundary.
\begin{theorem}
An arrangement of critical placements of a unit axis-aligned square among a set of $n$ line segments $\T$ can be computed in $\Oh(n^2 \log n+k)$ time, where $k$ is the output size.
\end{theorem}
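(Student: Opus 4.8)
The plan is to mirror the computation for lines and to insert the convex refinement of cells that already underlies the combinatorial bound for line segments. There are four stages: build $\arr(\T)$, refine every cell into convex subcells, scan each subcell to emit its critical segments, and overlay all emitted segments. First I would compute the arrangement $\arr(\T)$ of the $n$ input segments as a doubly-connected edge list; since $\arr(\T)$ has complexity $\Oh(n^2)$ it can be built in $\Oh(n^2)$ time (or within $\Oh(n^2\log n)$ by a simple randomized incremental construction), well inside the budget. Alongside it I would build the vertical and the horizontal trapezoidal decompositions of $\arr(\T)$, each in $\Oh(n^2\log n)$ time, used purely as orthogonal ray-shooting structures.

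Second, I would realise the subdivision of Figure~\ref{fig:linesegments}: from each of the $\Oh(n)$ segment endpoints shoot the four axis-parallel rays, each stopping at the first segment it meets; answering these $\Oh(n)$ ray queries with the trapezoidal maps costs $\Oh(n\log n)$. Overlaying these rays with $\arr(\T)$ yields the convex subcells; as argued in the proof of the line-segment complexity theorem, the total number of subcells, and the total complexity of this refined subdivision, is $\Oh(n^2)$, and the overlay is computed in $\Oh(n^2\log n)$ time. While inserting an artificial ray-edge I record which segment of $\T$ it abuts, so that a later scan of a subcell boundary always knows the relevant segment of $\T$ to its left and below.

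Third, for every convex subcell $c'$ (a subcell of some cell $c$) and every $\tr\in T_\eps$ I would compute $\Seg{c}{\tr}\cap c'$. By the structure established in the complexity theorem, inside $c'$ the relevant function is piecewise linear and concave, so $\Seg{c}{\tr}\cap c'$ is at most two pieces of a convex polyline for the four corner vectors and $\Oh(1)$ axis-parallel segments, read off from consecutive pairs of boundary edges, for the non-corner vectors; all of this is obtained by a single scan of $\partial c'$ in time $\Oh(1+|\partial c'|)$. Since $\sum_{c'}|\partial c'| = \Oh(n^2)$ and only the four corner vectors cost more than $\Oh(1)$ per subcell, this stage runs in $\Oh(n^2)$ time and produces $\Oh(n^2)$ straight critical segments in total. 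Finally I would compute the arrangement of these $\Oh(n^2)$ segments with an optimal segment-arrangement algorithm~\cite{chazelle1992optimal,h-a-97,mulmuley} in $\Oh(n^2\log n + k)$ time, where $k$ is its size; this arrangement is exactly the space of critical placements, proving the theorem.

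I expect the only genuine obstacle to be making precise that the orthogonal-ray refinement suffices for the third stage, i.e.\ that within each convex subcell the quantity ``distance to the nearest segment of $\T$ to the left (resp.\ below)'' behaves exactly as in a line arrangement, so that $\Seg{c}{\tr}$ retains the convex, piecewise-linear shape of Lemma~\ref{lem:convex}. This is precisely the geometric fact already used to prove the complexity bound for line segments, so once that structure and the $\Oh(n^2)$ total subcell complexity are invoked the algorithmic statement follows; everything else is bookkeeping with standard ray-shooting and segment-arrangement data structures.
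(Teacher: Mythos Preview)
Your proposal is correct and follows essentially the same route as the paper: compute $\arr(\T)$, refine non-convex cells into convex subcells via the orthogonal rays of Figure~\ref{fig:linesegments}, scan each convex subcell boundary to emit the pieces of $\bigcup_{\tr\in T_\eps}\Seg{c}{\tr}$, and then build the arrangement of the resulting $\Oh(n^2)$ straight segments in $\Oh(n^2\log n + k)$ time. The paper's own argument is a two-sentence sketch of exactly this pipeline; your version simply fills in standard data-structure details (trapezoidal maps for the ray shooting, DCEL bookkeeping of which input segment an artificial edge abuts) that the paper leaves implicit.
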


For placing a circle on lines, we compute $\cup_{\tr \in T_\eps} \Seg{c}{\tr}$
per cell $c$ of $\arr(\Li)$, spending $\Oh(n^2)$ time to find all arcs. Given a set of $n$ Jordan arcs, the complexity of building an arrangement is $\Oh((n+k)\log n)$~\cite{h-a-97}.\
Therefore, in our case we get:
\begin{theorem}
An arrangement of critical placements of a unit circle among a set of $n$ lines $\Li$ can be computed in $\Oh((n^2+k) \log n)$ time, where $k$ is the output size.
\end{theorem}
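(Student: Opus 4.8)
The plan is to combine the combinatorial bound on the number of critical arcs with a standard arrangement-construction result for Jordan arcs. First I would note that from the earlier analysis (the lemma that $S(c,\tau)$ consists of a constant number of piece-wise elliptic convex curves, together with the $\Oh(n^2)$ bound on the number of intersections between $S(\tau)$ and $S(\sigma)$), the total set of critical placements is a collection of $\Oh(n^2)$ elliptic arcs: for each cell $c$ of $\arr(\Li)$ and each of the $\Oh(1/\eps)$ vectors $\tau\in T_\eps$, the curve $\Seg{c}{\tau}$ decomposes into $\Oh(k_c)$ arcs where $k_c$ is the complexity of $c$, and $\sum_c k_c = \Oh(n^2)$. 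Since $1/\eps$ is a constant for this theorem's purposes (the $1/\eps^2$ factor is absorbed when we state the complexity as $\Oh(n^2/\eps^2)$ in the preceding theorem, but here we treat the algorithm's running time in terms of $n$ and the output size $k$), summing gives $\Oh(n^2)$ arcs in total.

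Next I would address how these arcs are actually produced. Compute $\arr(\Li)$ in $\Oh(n^2)$ time. Then traverse the boundary of each cell $c$: as one walks along $\partial c$, the quantity analogous to the function $f$ in \rlem{convex} — now the arc-length along $\Ci_p$ between the two relevant intersection points — is tracked, and the portions where it equals $\eps$ are recorded as arcs of the explicitly-derived ellipses. Because the derivation in the circle-on-lines lemma gives a closed-form ellipse equation depending only on the two lines bounding the relevant pair of edges and on $\eps$, each arc can be written down in $\Oh(1)$ time once we know which pair of cell edges it corresponds to. Scanning $\partial c$ costs time linear in the complexity of $c$, so over all cells this is $\Oh(n^2)$ time to output all $\Oh(n^2)$ arcs.

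Finally I would invoke the cited result of Har-Peled (\cite{h-a-97}): given $m$ Jordan arcs, one can construct their arrangement in $\Oh((m+k)\log m)$ time, where $k$ is the size of the resulting arrangement. Here $m = \Oh(n^2)$, so $\log m = \Oh(\log n)$, and we obtain $\Oh((n^2+k)\log n)$ total time, which dominates the $\Oh(n^2)$ preprocessing. That yields the stated bound.

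The main obstacle — really the only non-routine point — is checking that the arcs genuinely qualify as ``Jordan arcs'' in the sense required by the arrangement-construction algorithm: each $\Seg{c}{\tau}$ piece must be a bounded-complexity, well-behaved curve on which the standard primitive operations (intersection of two arcs, comparison along a sweep line, etc.) can be performed in $\Oh(1)$ time. This follows because each piece is a sub-arc of an explicitly given ellipse (or a line segment in the degenerate case $a=\tan\frac\eps2$), and two conic arcs intersect in a constant number of points computable in $\Oh(1)$; one should also confirm that the $\Oh(n^2)$ arcs pairwise intersect $\Oh(1)$ times on average in the amortized-charging sense already established, so that the $k$ in the running time is exactly the output complexity and not something larger. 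Everything else is bookkeeping.
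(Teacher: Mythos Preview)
Your proposal is correct and follows essentially the same approach as the paper: compute $\arr(\Li)$, extract the $\Oh(n^2)$ elliptic arcs by scanning cell boundaries, then invoke the $\Oh((m+k)\log m)$ Jordan-arc arrangement construction of~\cite{h-a-97} with $m=\Oh(n^2)$. The paper's own argument is a two-line sketch that omits the checks you spell out (that the arcs are well-behaved conics admitting constant-time primitives, and that $1/\eps$ is being treated as constant in the stated bound), so your additional justification is welcome but not a departure in method.
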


\section{Applications to junction detection}
\label{sec:junction}

While junctions are normally features of (road) networks, our objective is to extract
junctions purely based on trajectory data. This allows us to identify regions that
serve as junctions in open spaces like city squares, or to identify places where animals
pass by and choose one or the other direction.

We describe basic properties of a junction, which motivate
corresponding definitions.
Our approach is to treat any point in the plane as a potential junction; we define
when a point is \emph{junction-like} depending on the trajectories in its neighborhood.
We show that the arrangement from \rsec{complex} can be used to partition the plane
into regions of points that are similarly junction-like, that is to say they have
the same basic properties.
In the appendix we present---as a proof of concept---the output of a prototype 
implementation run on idealized data.
\medskip

\noindent
{\bf Properties of a junction.}
A junction is a region where trajectories enter and leave in a limited number of directions or routes,
called \emph{arms} of the junction. While the moving entities initiate a direction to
leave the junction in the core area, the arms are only ``visible'' somewhat further away
from the core; see \rfig{junctdefinition}.
This motivates the use of two concentric squares to decide whether a point is junction-like.
The way in which the trajectories enter and leave the two squares determines
whether the point is junction-like and how significant that junction is.
Since we can scale the plane with all trajectories, we can assume that the larger
square is unit-size. We let the smaller square have side length $\tfrac{1}{2}$.

\begin{figure}[tb]
\begin{center}
\includegraphics[width=.9\textwidth]{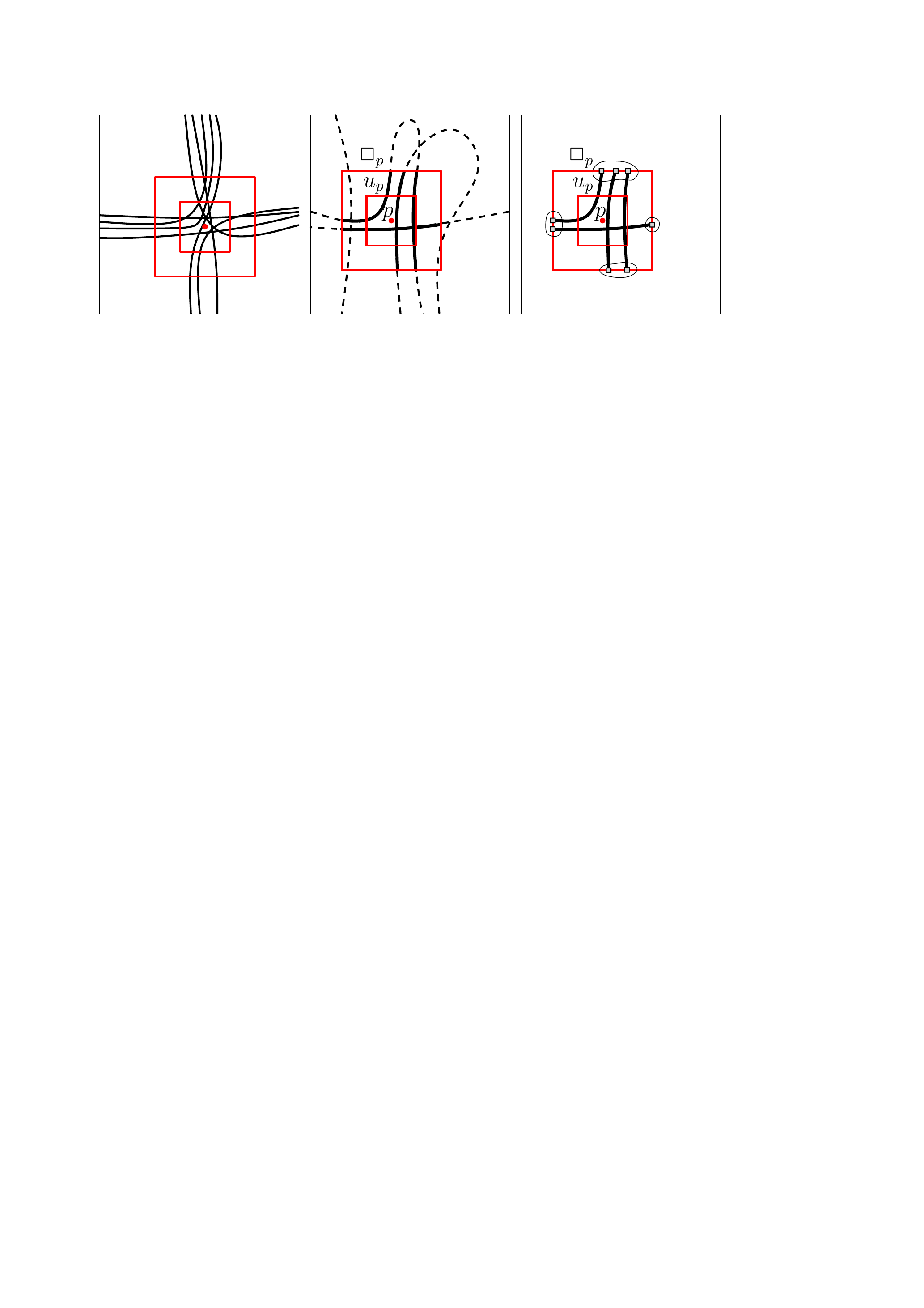}
\end{center}
\caption{Left: Junction-like point with its two concentric squares.
Middle: Solid subtrajectories are salient for the junction.
Right: $\eps$-clustering of the salient subtrajectory endpoints, revealing four arms. }
\label{fig:junctdefinition}
\end{figure}

For a point $p$ in the plane, let $u_p$ and $\Sq_p$ denote the boundaries of squares with
side length $\tfrac{1}{2}$ and $1$, respectively, both centered at $p$.

\begin{defi}
A (sub)trajectory $T$ is \emph{salient} for a point $p$ if it lies completely inside $\Sq_p$,
it intersects $u_p$, and its endpoints lie on $\Sq_p$.
\end{defi}

From a set $\T$ of trajectories we consider all subtrajectories that are salient;
see \rfig{junctdefinition}.
Such subtrajectories should be maximal: their endpoints must be actual crossings with
the boundary of $\Sq_p$ and not just contacts.
A single trajectory in $\T$ can have multiple salient subtrajectories.

\begin{defi}
Given a set of points $Q$ on the boundary of a square $\Sq_p$ and a constant $\eps>0$, an
\emph{$\eps$-clustering} is a partitioning of $Q$ such that for any two points $q,q'\in Q$ belonging to the same cluster,
there exists a sequence of points $q=q_1,\ldots,q_j=q'$, all in $Q$, such that $q_i$ and
$q_{i+1}$ for $1\leq i < j$ have distance at most $\eps$ along $\Sq_p$.
\end{defi}

\begin{defi}
Let $p$ be a point in the plane, let $\T$ be a set of trajectories, let $\eps>0$, and
let $Q_p(\T)$ be the set of endpoints of all salient subtrajectories from $\T$.
A point $p$ is \emph{junction-like} if an $\eps$-clustering of $Q_p(\T)$ has
at least three clusters.
\end{defi}

When we move the point $p$ over the plane, these clusters can grow, shrink, merge
and split. A cluster may for instance split when two consecutive intersection points
from $Q_p(\T)$ become more separated than $\eps$.
A cluster may shrink because a subtrajectory is no longer salient,
which then may also cause a split of a cluster.

It should be clear that we can compute a subdivision of the plane into
maximal cells where the $\eps$-clustering is the same. It should also be clear that
the theory of \rsec{complex} discusses a simplified version of this problem where
salience is not considered.
\medskip

\noindent
{\bf Implementation and results.}
There are many ways to obtain junctions from the junction-like points and their $\eps$-clusterings.
There are also many ways to attach a significance to a junction, based on the number of clusters
and their sizes.
Moreover, we may want to distinguish between \emph{real junctions} and \emph{crossings},
where a crossing is a junction-like region with four arms where all trajectories go
straight, and a real junction has several different splits and merges of trajectories
over the clusters. For both types we can define their significance in various ways;
see \cite{vanduijn} for further discussions.

\rfig{vis} shows some results of an implementation that samples points from a regular square grid
and evaluates the significance of a junction according to such a measure. Junction-like
points are indicated by a colored grid cell, which contains a measured value. 
What can be observed is that an area around each junction contains several junction-like
points with the same measure, and that the measure drops quickly towards the edge of this region.
It is easy to
convert the output so that for each group of junction-like cells, only one junction is reported.
\rfig{vis2} shows that this type of definition can indeed identify junctions and assign a 
significance in a reasonable manner.

\begin{figure*}[htb]
\begin{center}
\includegraphics[width=.9\textwidth]{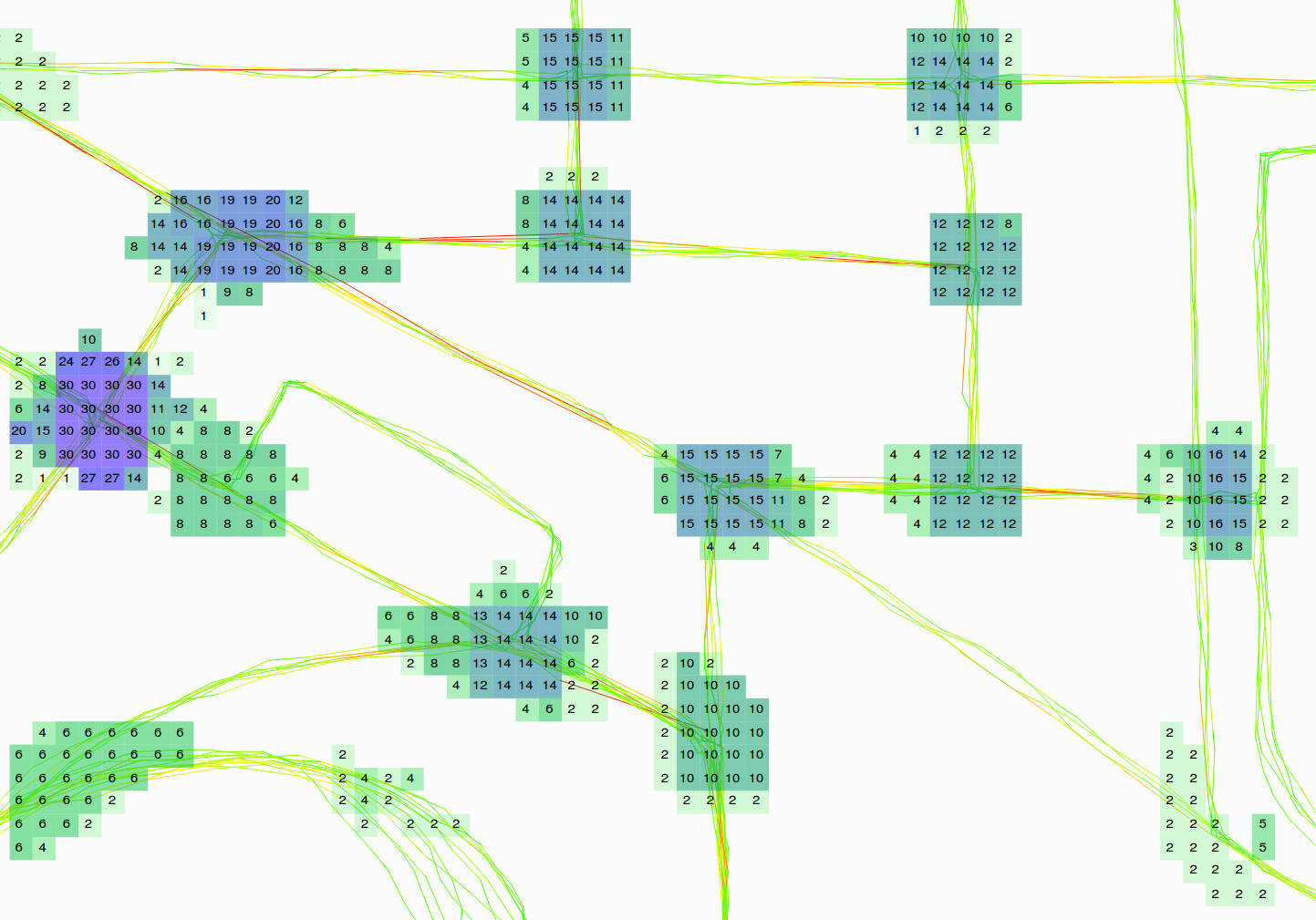}
\end{center}
\caption{Junction-like points indicated by colored grid cells, darker cells contain a higher measure.}
\label{fig:vis}
\end{figure*}

\begin{figure*}[htb]
\begin{center}
\includegraphics[width=.9\textwidth]{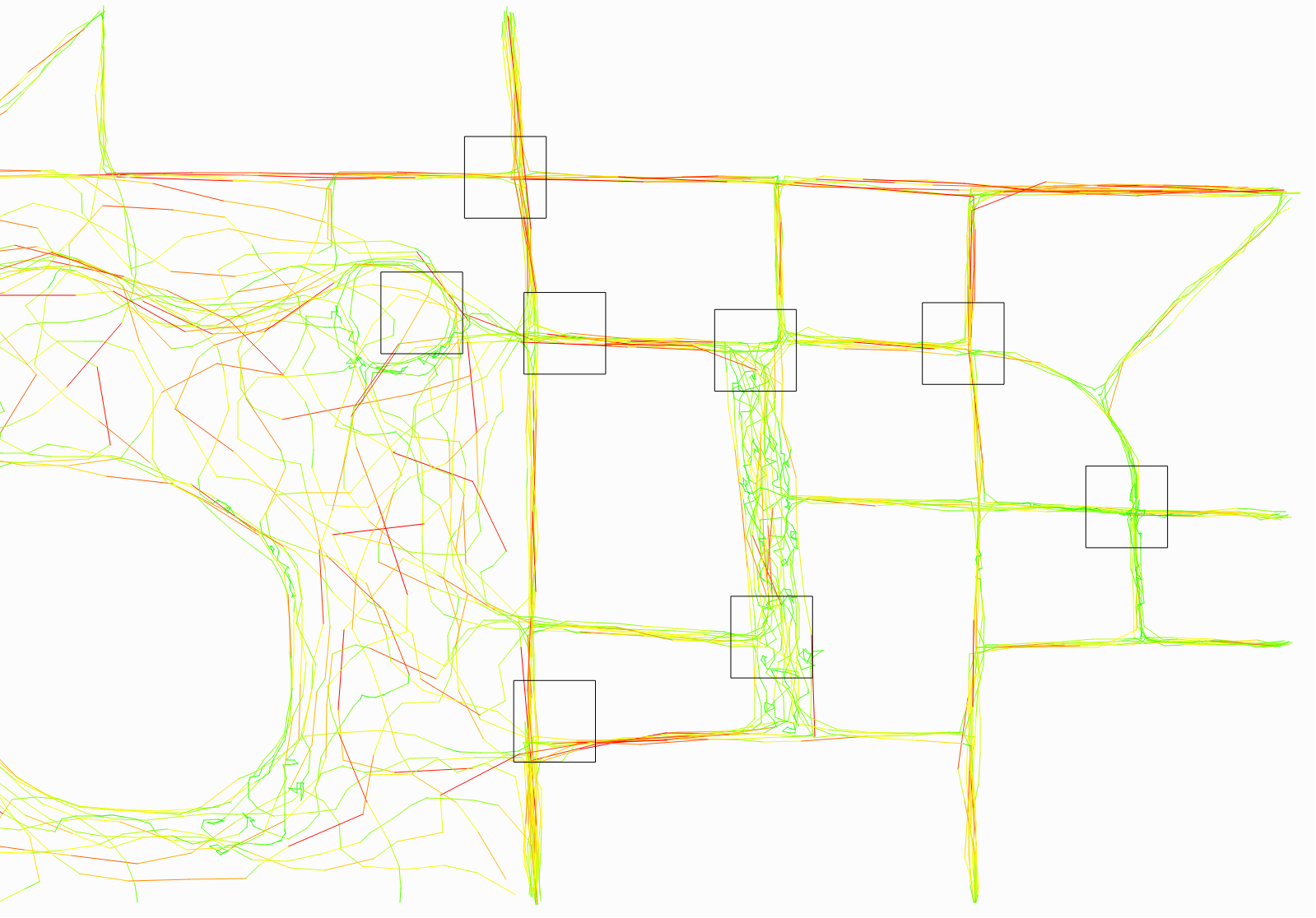}
\end{center}
\caption{Locations of $u_p$ indicated for the eight most significant junction-like points.}
\label{fig:vis2}
\end{figure*}


\section{Conclusions and further research}
\label {sec:conclusions}

We analyzed the complexity of the placement space of a unit square or circle in an arrangement of lines or line segments,
a problem that lies at the core of junction detection in trajectory analysis.
Our results include upper bounds that improve on the naive $O(n^4)$ bound by showing that two of the linear factors in fact only need to depend on $1/\eps$.
Increasing the number of trajectories is not a reason to increase the granularity of the clustering, so in practice we expect $1/\eps$ to be much smaller than $n$, if not constant.
The resulting $\Theta(n^2/\eps^2)$ bound is tight in the worst case.
The combinatorial problem is interesting from the theoretical perspective because it combines arrangements with distances in the arrangement.

A prototype implementation applies the result to junction detection.
While the implementation demonstrates the value of the approach and indicates that the time complexity is not prohibitive, several simplifications of the problem have been made and no extensive experiments have been performed.
In a follow-up study, it would be interesting to augment the implementation with circles or different shapes, and to run the algorithm on real-world trajectory data from various sources.

From a theoretical perspective, it would be interesting to explore further extensions of the approach (i.e. curved trajectories, trajectories embedded in higher-dimensional spaces),
or apply it to other distance based arrangements.

\newpage
\section*{Acknowledgments}

M.L. is supported by the Netherlands Organisation for Scientific
Research (NWO) under grant 639.021.123.
MADALGO Center for Massive Data Algorithmics is supported in part by the Danish National Research Foundation grant DNRF 84.

\newpage
\small
\bibliographystyle{plain}
\bibliography{junct}

\begin{thebibliography}{10}

\bibitem{BenkertDGW10}
M.~Benkert, B.~Djordjevic, J.~Gudmundsson, and T.~Wolle.
\newblock Finding popular places.
\newblock {\em Int. J. Comput. Geometry Appl.}, 20(1):19--42, 2010.

\bibitem{Buchin}
K.~Buchin, M.~Buchin, M.~van Kreveld, M.~L{\"{o}}ffler, J.~Luo, and R.~I.
  Silveira.
\newblock Processing aggregated data: the location of clusters in health data.
\newblock {\em GeoInformatica}, 16(3):497--521, 2012.

\bibitem{chazelle1992optimal}
B.~Chazelle and H.~Edelsbrunner.
\newblock An optimal algorithm for intersecting line segments in the plane.
\newblock {\em Journal of the ACM (JACM)}, 39(1):1--54, 1992.

\bibitem{glw-mpstd-08}
J.~Gudmundsson, P.~Laube, and T.~Wolle.
\newblock Movement patterns in spatio-temporal data.
\newblock In S.~Shekhar and H.~Xiong, editors, {\em Encyclopedia of GIS}, pages
  726--732. Springer, 2008.

\bibitem{gk}
J.~Gudmundsson and M.~van Kreveld.
\newblock Computing longest duration flocks in trajectory data.
\newblock In {\em 14th {ACM} International Symposium on Geographic Information
  Systems}, pages 35--42, 2006.

\bibitem{gks}
J.~Gudmundsson, M.~van Kreveld, and F.~Staals.
\newblock Algorithms for hotspot computation on trajectory data.
\newblock In {\em 21st {SIGSPATIAL} International Conference on Advances in
  Geographic Information Systems}, pages 134--143, 2013.

\bibitem{h-a-97}
D.~Halperin.
\newblock Arrangements.
\newblock In J.~E. Goodman and J.~O'Rourke, editors, {\em Handbook of Discrete
  and Computational Geometry}, chapter~24, pages 529--562. CRC Press LLC, Boca
  Raton, FL, 2004.

\bibitem{hm}
S.~Har{-}Peled and S.~Mazumdar.
\newblock Fast algorithms for computing the smallest k-enclosing circle.
\newblock {\em Algorithmica}, 41(3):147--157, 2005.

\bibitem{jyj-tpm-11}
H.~Jeung, M.~Yiu, and C.~Jensen.
\newblock Trajectory pattern mining.
\newblock In Y.~Zheng and X.~Zhou, editors, {\em Computing with Spatial
  Trajectories}, pages 143--177. Springer, 2011.

\bibitem{MorenoTRB10}
B.~Moreno, V.C. Times, C.~Renso, and V.~Bogorny.
\newblock Looking inside the stops of trajectories of moving objects.
\newblock In {\em Proc. XI Brazilian Symposium on Geoinformatics}, pages 9--20.
  MCT/INPE, 2010.

\bibitem{mount}
D.~M. Mount, R.~Silverman, and A.~Y. Wu.
\newblock On the area of overlap of translated polygons.
\newblock {\em Computer Vision and Image Understanding}, 64(1):53--61, 1996.

\bibitem{mulmuley}
K.~Mulmuley.
\newblock {\em Computational geometry - an introduction through randomized
  algorithms}.
\newblock Prentice Hall, 1994.

\bibitem{openshaw}
S.~Openshaw, M.~Charlton, C.~Wymer, and A.~Craft.
\newblock A mark 1 geographical analysis machine for the automated analysis of
  point data sets.
\newblock {\em International Journal of Geographical Information System},
  1(4):335--358, 1987.

\bibitem{PalmaBKA08}
A.T. Palma, V.~Bogorny, B.~Kuijpers, and L.~Ot{\'a}vio Alvares.
\newblock A clustering-based approach for discovering interesting places in
  trajectories.
\newblock In {\em Proc. 2008 ACM Symposium on Applied Computing}, pages
  863--868, 2008.

\bibitem{TiwariK13}
S.~Tiwari and S.~Kaushik.
\newblock Mining popular places in a geo-spatial region based on {GPS} data
  using semantic information.
\newblock In {\em Proc. 8th Workshop on Databases in Networked Information
  Systems}, volume 7813 of {\em LNCS}, pages 262--276, 2013.

\bibitem{vanduijn}
I.~van Duijn.
\newblock Pattern extraction in trajectories and its use in enriching
  visualisations.
\newblock Master's thesis, Department of Information and Computing Sciences,
  Utrecht University, 2014.
\newblock http://dspace.library.uu.nl/handle/1874/294075.

\bibitem{hk}
S.~van Hagen and M.~van Kreveld.
\newblock Placing text boxes on graphs.
\newblock In {\em Graph Drawing, 16th International Symposium}, volume 5417 of
  {\em Lecture Notes in Computer Science}, pages 284--295. Springer, 2008.

\bibitem{ksw}
M.~van Kreveld, {\'E}.~Schramm, and A.~Wolff.
\newblock Algorithms for the placement of diagrams on maps.
\newblock In {\em Proceedings of the 12th annual ACM international workshop on
  Geographic information systems}, pages 222--231. ACM, 2004.

\end{thebibliography}

\end{document}